\definecolor{winered}{rgb}{0.6,0.1,0.1}
\renewcommand*{\backref}[1]{}
\renewcommand*{\backrefalt}[4]{%
\ifcase #1%
\marginpar{\tiny no cite}
\or
 $\rightarrow$~p.~#2.%
\else
  $\rightarrow$~pp.~#2.%
\fi
}
\definecolor{darkgreen}{rgb}{0.01,0.6,0.1}
\newcommand{\myemph}[1]{{\color{darkgreen!70!black}\emph{#1}}}
\tikzstyle{agentnode} = [inner sep=.5pt, circle, draw]
\newcommand{\gettikzxy}[3]{%
  \tikz@scan@one@point\pgfutil@firstofone#1\relax
  \edef#2{\the\pgf@x}%
  \edef#3{\the\pgf@y}%
}
\newcommand{\appsymb}{$\star$}
\newcommand{\appref}[1]{{\hyperref[#1]{\appsymb}}}
\newtheorem{lemma}{Lemma}
\newtheorem{claim}{Claim}
\newtheorem{proposition}{Proposition}
\newtheorem{theorem}{Theorem}
\theoremstyle{definition}
\newtheorem{definition}[lemma]{Definition}
\newtheorem{example}{Example}
\crefname{table}{Table}{Tables}
\crefname{figure}{Figure}{Figures}
\crefname{theorem}{Theorem}{Theorems}
\crefname{definition}{Definition}{Definitions}
\crefname{corollary}{Corollary}{Corollaries}
\crefname{observation}{Observation}{Observations}
\crefname{lemma}{Lemma}{Lemmas}
\crefname{example}{Example}{Examples}
\crefname{reduction}{Reduction}{Reductions}
\crefname{construction}{Construction}{Constructions}
\crefname{subsection}{Subsection}{Subsections}
\crefname{section}{Section}{Sections}
\crefname{proposition}{Proposition}{Propositions}
\crefname{algorithm}{Algorithm}{Algorithms}
\crefname{drule}{Rule}{Rules}
\crefname{claim}{Claim}{Claims}
\crefname{appendix}{Appendix}{Appendix}
\newcommand{\myparagraph}[1]{%
\smallskip
\noindent \textbf{#1}%
}
\newcommand{\occ}{\ensuremath{\mathsf{occ}}}
\newcommand{\threesat}{\textsc{3-SAT}}
\newcommand{\ttrue}{\texttt{true}}
\newcommand{\tfalse}{\texttt{false}}
\newcommand{\ff}{\ensuremath{f}}
\newcommand{\negv}{\ensuremath{\overline{v}}}
\tikzstyle{blueline} = [thick, blue, dotted]
\tikzstyle{redline} = [thick, red, dashed]
\tikzstyle{blackline} = [thick, black]
\newcommand{\RO}{\textsc{Reachable Object}}
\newcommand{\ROs}{\RO}
\begin{document}
\sloppy

\newcommand{\mytitle}{Good Things Come to Those Who Swap Objects on Paths}
\title{\mytitle}

\newcommand{\papernumber}{Paper \#3817}
 \author{Matthias Bentert$^{1}$ \and Jiehua Chen$^{2}$ \and Vincent Froese$^1$
  \and Gerhard J.\ Woeginger$^3$\\
 {\small $^1$Algorithmics and Computational Complexity, Faculty~IV, TU Berlin, Berlin, Germany}\\
 {\small \texttt{\{matthias.bentert,vincent.froese\}@tu-berlin}}\\
 {\small $^2$University of Warsaw, Warsaw, Poland}\\
 {\small \texttt{jiehua.chen2@gmail.com}}\\
 {\small $^3$RWTH Aachen University, Aachen, Germany}\\
 {\small \texttt{woeginger@algo.rwth-aachen.de}}\\
 }

\date{}

\maketitle

\begin{abstract}
We study a simple exchange market, introduced by Gourv\`{e}s, Lesca and Wilczynski~(IJCAI-17), where every agent initially holds a single object.
The agents have preferences over the objects, and two agents may swap their 
objects if they both prefer the object of the other agent.
The agents live in an underlying social network that governs the structure of the swaps:
Two agents can only swap their objects if they are adjacent.
We investigate the {\RO} problem, which asks whether a given starting situation can 
ever lead, by means of a sequence of swaps, to a situation where a given agent obtains a given object.
Our results answer several central open questions on the complexity of {\RO}.
First, the problem is polynomial-time solvable if the social network is a path.
Second, the problem is NP-hard on cliques and generalized caterpillars.
Finally, we establish a three-versus-four dichotomy result for preference lists of 
bounded length:
The problem is easy if all preference lists have length at most three,
and the problem becomes NP-hard even if all agents have preference lists of length at most four.
  
  
\end{abstract}

\looseness=-1
\section{Introduction}\label{sec:intro}

Resource allocation under preferences is a widely-studied problem arising in areas such as artificial intelligence and economics.
We consider the case when resources are \emph{indivisible objects}
and each agent, having preferences over the objects, is to receive exactly one object.
In the standard scenario known as \emph{housing market}, each agent initially holds an object, and the task is to \emph{reallocate} the objects so as to achieve some desirable properties, such as Pareto optimality, fairness, or social welfare~\cite{shapley_cores_1974,roth_incentive_1982,AbrCecManMeh2005,SoeUen2010}.
While a large body of research in the literature takes a \emph{centralized} approach that globally controls and reallocates an object to each agent, 
we pursue a \emph{decentralized} (or \emph{distributed}) strategy where any pair of agents may locally \emph{swap} objects as long as this leads to an improvement for both of them, i.e., they both receive a more preferred object~\cite{damamme_power_2015}. 

To capture the situation where not all agents are able to communicate and swap with each other,
\citet{GouLesWil2017} introduced a variant of distributed object reallocation where the agents are embedded in an underlying social network so that agents can swap objects with each other only if \begin{inparaenum}[(i)] \item they are directly connected (socially tied) via the network and \item will be better off after the swap. \end{inparaenum}
To study the distributed process of swap dynamics along the underlying network topology, the authors analyzed various computational questions.
In particular, they study the \RO{} problem, which asks whether a given agent can reach a desired object via some sequence of mutually profitable swaps between agents.

Consider the following example (initial objects are drawn in boxes). 
If the underlying graph is complete, object~$x_3$ is reachable for agent~$1$ within one swap.
However, if the graph is a cycle as shown below, then to let object~$x_3$ reach
agent~$1$, agent~$3$ can swap with agent~$2$, and then agent~$2$ can swap object~$x_3$ with agent~$1$.
\begin{center}
 \begin{tikzpicture}[scale=1, every node/.style={scale=0.9}]
\node[circle,draw, label=below:1, inner sep=3pt] at (0,0) (1) {};
\node[circle,draw, label=below:2, inner sep=3pt] at (.8,.4) (2) {};
\node[circle,draw, label=below:3, inner sep=3pt] at (1.6,.4) (3) {};
\node[circle,draw, label=below:4, inner sep=3pt] at (2.4,0) (4) {};
\node[circle,draw, label=below:5, inner sep=3pt] at (1.6,-.4) (5) {};
\node[circle,draw, label=below:6, inner sep=3pt] at (0.8,-.4) (6) {};

\foreach \i / \j in {1/2, 2/3,3/4,4/5,5/6,6/1} {
  \draw (\i) -- (\j);
}
\node  at (4.3,.3) (v1) {$1:x_3\succ x_4 \succ$ \fbox{$x_1$}\,,};
\node[right = 0pt of v1] (v2) {$2:x_1 \succ x_3 \succ x_4 \succ$ \fbox{$x_2$}\,,};
\node[below = 12pt of v1.west,anchor=west] (v3) {$3:x_1 \succ x_2\succ x_4 \succ$ \fbox{$x_3$}\,,};
\node[right = -1pt of v3] (v4) {$4: x_5\succ x_3 \succ$ \fbox{$x_4$}\,,};
\node[below = 12pt of v3.west, anchor=west] (v5){$5:x_6\succ x_3 \succ$ \fbox{$x_5$}\,,};
\node[right = -1pt of v5] (v6){$6:x_4\succ x_3 \succ$ \fbox{$x_6$}\,,};
\end{tikzpicture}
\end{center}

\noindent Showing that \RO{} is NP-hard when the underlying graph is a tree and presenting a simple polynomial-time algorithm for a special restricted case on a path (where the given agent is an endpoint of the path), the authors explicitly leave as open questions the general case on paths and other special cases (including restricted input preferences).

In this work, we answer several open questions~\cite{GouLesWil2017,SW18} and draw a comprehensive picture of the computational complexity of \RO.
\begin{compactitem}[$\bullet$]
\item Our main contribution is a polynomial-time algorithm on paths (\Cref{thm:path}).
  This algorithm combines a multitude of structural observations in a nontrivial way and requires a sophisticated analysis.
  \item Second, we show NP-hardness on complete graphs even if all preference lists have length at most four~(\Cref{thm:NP-c-length-4}).
  We complement this hardness by giving a linear-time algorithm for preferences lists of length at most three~(\Cref{thm:preflength-three}).
\item Moreover, we prove NP-hardness for generalized caterpillars (\Cref{thm:cater}) and thereby narrow the gap between tractable and intractable cases of the problem.
  \end{compactitem}
The NP-hardness from \Cref{thm:NP-c-length-4} implies that the problem is already NP-hard even if the agents are allowed to swap without restrictions and no agent has more than three objects which she prefers to her initial one.
The hardness reduction can be adapted to also show NP-hardness for the case where the maximum vertex degree of the graph is five and the preference lists have length at most four.


\paragraph{Related Work.}

\citet{GouLesWil2017} proposed the model of distributed object reallocation via swaps on social networks, and showed that \RO{} is NP-hard on trees. 
Moreover, they showed polynomial-time solvability on stars and for a special case on paths, namely when testing whether an object is reachable for an agent positioned on an endpoint of the path. They also indicated that the problem is polynomial-time solvable on paths when the agent and the object are at constant distance.
Notably, they explicitly asked for a polynomial-time algorithm on paths in general and
describe the problem as being at the frontier of tractability, despite its simplicity.
Besides \RO{}, \citet{GouLesWil2017} also considered the questions of reachability of a particular allocation, the \textsc{Reachable Assignment} problem, and existence of a Pareto-efficient allocation; both are shown to be NP-hard.

\citet{SW18} studied the parameterized complexity of \RO{} with respect to parameters such as the maximum vertex degree of the underlying graph or the overall number of swaps allowed in a sequence.
They showed several parameterized intractability results and also fixed-parameter tractable cases (none of which covers our results).
Notably, in their conclusion they suggested to study restrictions on the preferences~(as we do in this paper). 
 
Other examples of recently studied problems regarding allocations of indivisible resources under social network constraints are envy-free allocations~\cite{BKN18,BCGLMW18}, Pareto-optimal allocations~\cite{IP18}, and two-sided stable matching~\cite{ArcVas2009,AnsBhaHoe2017}. See the work of \citet{GouLesWil2017} fore more related work.

\subsection{Preliminaries}\label{sec:prelim}

Let $V = \{1,2,\ldots, n\}$ be a set of $n$ agents and $X\!=\!\{x_1,x_2$, $\ldots,x_n\}$ be a set of $n$ objects.
Each agent~$i \in V$ has a \myemph{preference list over a subset~$X_i\subseteq X$} of the objects, which is a strict linear order on $X_i$.
This list is denoted as $\succ_i$; we omit the subscript if the agent is clear from the context.
For two objects~$x_j, x_{j'}\in X_i$, the notation~$x_j \succ_i x_{j'}$ means that agent~$i$ \myemph{prefers~$x_j$ to~$x_{j'}$}.
A \myemph{preference profile~$\mathcal{P}$ for the agent set~$V$} is a collection~$(\succ_i)_{i\in V}$ of preference lists of the agents in $V$.
An \myemph{assignment} is a bijection~$\sigma\colon V \to X$, where each agent~$i$ is assigned an object~$\sigma(i)\in X_i$.
We say that \myemph{an assignment~$\sigma$ admits a rational trade for two agents~$i, i'\in V$} if 
$\sigma(j) \succ_i \sigma(i)$ and $\sigma(i) \succ_j \sigma(j)$.

We assume that the agents from $V$ form a social network such that pairs of adjacent agents can trade their objects. 
The social network is modeled by an undirected graph~$G=(V,E)$ with $V$ being also the vertex set and $E$ being a set of edges on $V$. 
We say that an assignment~$\sigma$ admits \myemph{a swap for two agents~$i$ and $i'$}, denoted as $\tau\!=\!\{\{i,\sigma(i)\},\{i',\sigma(i')\}\}$,
if it admits a rational trade for $i$ and $i'$ and 
the vertices corresponding to~$i$ and $i'$ are adjacent in the graph, i.e., $\{i,i'\}\in E$.
Accordingly, we also say that objects $\sigma(i)$ (resp.\ $\sigma(i')$) \myemph{passes through} edge~$\{i,i'\}$.
By definition, an object can pass through an edge at most once.

A \myemph{sequence of swaps} is a sequence~$(\sigma_0,\sigma_1,\ldots, \sigma_t)$ of assignments where for each index~$k \in \{0,1,\ldots, t-1\}$ 
there are two agents~$i,i' \in V$ for which $\sigma_k$ admits a swap 
such that  
\begin{inparaenum}[(1)]
\item $\sigma_{k+1}(i)=\sigma_{k}(i')$,
\item $\sigma_{k+1}(i')=\sigma_{k}(i)$, and 
\item for each remaining agent~$z\in V\setminus \{i,i'\}$ it holds that $\sigma_{k+1}(z)=\sigma_{k}(z)$.
\end{inparaenum}
We call an \myemph{assignment~$\sigma'$ reachable from another assignment~$\sigma$} if there is a sequence~$(\sigma_0,\sigma_1,\ldots, \sigma_t)$ of swaps such that $\sigma_0=\sigma$ and $\sigma_t=\sigma'$.
The reachability relation defines a partial order on the set of all possible assignments.
Given an initial assignment~$\sigma_0$, we say that \myemph{an object~$x\in X$ is reachable for an agent~$i$} if there is an assignment~$\sigma$ which is reachable from $\sigma_0$ with $\sigma(i)=x$.


We study the following computational problem~\cite{GouLesWil2017}, called \RO{} (\ROs),
which has as input an agent set~$V$, an object set~$X$, a preference profile for $V$, an undirected graph~$G=(V,E)$ on $V$, 
an initial assignment~$\sigma_0$,
an agent~$I\in V$ and an object~$x\in X$,
and ask whether $x$ is reachable for $I$ from $\sigma_0$.
\noindent Note that \ROs{} is contained in NP~\cite{GouLesWil2017}.

\newcommand{\inNP}{\ROs{} is in NP.}
\begin{proposition}\label[proposition]{prop:RO-in-NP}
  \inNP
\end{proposition}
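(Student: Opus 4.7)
The plan is to exhibit a sequence of swaps as the polynomial-size certificate and argue that whenever $x$ is reachable for $I$, there is such a sequence of length polynomial in~$n$, which can clearly be verified in polynomial time by checking, swap by swap, that the two involved agents are adjacent in~$G$ and that each of them strictly prefers the other's current object to her own.

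The only nontrivial point is the length bound. The key observation is that along any sequence of swaps, the object held by any fixed agent~$i$ strictly improves in $\succ_i$ at every swap in which~$i$ participates: by definition of a rational trade, whenever $i$ trades, $i$ receives an object strictly better (in $\succ_i$) than the one she currently holds. Hence the multiset of objects that ever pass through~$i$'s hand is totally ordered by $\succ_i$, so $i$ participates in at most $|X_i|-1 \le n-1$ swaps during the whole sequence. Summing over all agents and dividing by two (each swap involves two agents) gives at most $\binom{n}{2}$ swaps in total. In particular, the sequence has size polynomial in the input.

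Therefore a nondeterministic machine can guess a sequence $(\sigma_0,\sigma_1,\ldots,\sigma_t)$ with $t \le \binom{n}{2}$, check in polynomial time that $\sigma_0$ is the given initial assignment, that each consecutive pair differs by a valid swap (adjacency in~$G$ plus mutual strict preference), and that $\sigma_t(I)=x$. The main obstacle is precisely the length bound above; once one has the monotone-improvement argument per agent, the rest is bookkeeping.
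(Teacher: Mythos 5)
Your proof is correct, and it takes a different (and in fact tighter) route than the paper. The paper bounds the length of a swap sequence by observing that an object can pass through each edge at most once, so each object participates in at most $O(n^2)$ swaps and the whole sequence has length $O(n^3)$. You instead exploit monotonicity on the agent side: each agent's held object strictly improves in her own preference order at every swap she takes part in, so she participates in at most $|X_i|-1\le n-1$ swaps, and summing over agents and halving gives the bound $\binom{n}{2}$ on the total number of swaps. Your bound is quadratic rather than cubic and is independent of the structure (and density) of the social network, whereas the paper's edge-based count is the one that gets reused elsewhere in their structural analysis (e.g.\ the uniqueness of the edge where two objects can be swapped on a path). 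Either bound suffices for NP membership, and your verification procedure (check adjacency and mutual strict preference for each consecutive pair of assignments, then check $\sigma_t(I)=x$) is exactly what is needed.
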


\begin{proof}
  Since each object can pass through each edge at most once
  it follows that each reachable assignment uses a different edge to swap objects.
  Hence, a certificate for a \ROs{} instance with $n$~agents is a sequence of swaps where each object appears at most $O(n^2)$ times, i.e. the length of the sequence is $O(n^3)$.
\end{proof}

We consider simple undirected graphs~$G=(V,E)$ containing vertices~$V$ and edges~$E\subseteq\binom{V}{2}$. We assume the reader to be familiar with basic graph classes such as paths, cycles, trees, and complete graphs (cliques)~\cite{Diestel2012}.
A \myemph{caterpillar} is a tree such that removing all leaves yields a path (i.e, all vertices are within distance at most one of a central path). We call the edges to the leaves attached to the central path \myemph{hairs} of \myemph{length}~1. A \myemph{generalized caterpillar} has hairs of length~$h\ge 1$. 

\section{Preferences of Length at Most Three}\label{sec:lenth=3}

In this section, we provide a linear-time algorithm for \ROs{} in the special case of preferences with length at most three. The main idea is to reduce \ROs{} to reachability in directed graphs. The approach is described in \cref{alg:length=3}.
Take the example from the introduction,
and delete object~$x_1$ from agent~$3$'s preference list
and object~$x_3$ from agent~$2$' preference list.
Then, our algorithm finds the following swap sequence for object~$x_3$ to reach agent~$1$:
$4\leftrightarrow 3$, $3\leftrightarrow 2$, $2\leftrightarrow 1$, $4\leftrightarrow 5$, $5\leftrightarrow 6$, $6\leftrightarrow 1$; here ``$i \leftrightarrow j$'' means that agent~$i$ swaps with agent~$j$ over the objects currently held by them.

  Throughout this section, we assume that each agent~$i \in \{1,\ldots, n\}$, when initially holding object~$x_i$, i.e., $\sigma_0(i)=x_i$, has either two or three objects in her preference list; we can ignore those agents who only have one object in their respective preference lists because they will never trade with someone else.
  We aim to determine whether object~$x_n$, which is initially held by agent~$n$, is reachable for agent~$1$.
  To ease the reasoning, we define an equivalent and succinct notion of swap sequences, which only focuses on the relevant swaps.
  Let $\tau=\{\{i,x\},\{j,y\}\}$ be a swap for which an assignment~$\sigma$ admit.
  Then, we use the notation \myemph{$\sigma/\tau$} to denote the assignment
  that results from $\sigma$ by performing the swap~$\tau$.
  In an overloading manner, we say that a \emph{sequence~$\phi = (\tau_1,\tau_2,\ldots,\tau_m)$ of swaps} is a \myemph{valid swap sequence for some start assignment~$\sigma_0$}
  if there exists a sequence of swaps~$(\sigma_0,\sigma_1,\ldots,\sigma_m)$~(see \cref{sec:prelim}) such that for each $z\in \{1,\ldots,m\}$, it holds that $\sigma_z=\sigma_{z-1}/\tau_z$.

  We first observe a property which 
  allows us to exclusively focus on a specific valid swap sequence where each swap involves swapping object~$x_q$, if object~$x_q$ reaches agent~$1$ during the swap sequence.
  \newcommand{\swappq}{
    Let $\phi=(\tau_1,\tau_2,\ldots,\tau_m)$ be a valid swap sequence for $\sigma_0$ such that the agent~$1$ obtains~$x_n$ in the last swap.
    Consider two objects~$x_p$ and $x_q$.
    If $\phi$ contains a swap~$\tau_{r}$ with $\tau_r=\{\{1,x_p\},\{k,x_q\}\}$ (for some agent~$k$),
    then let $\phi'=(\tau'_1,\tau'_2,\ldots,\tau'_s)$ be a subsequence of the prefix~$\phi_0=(\tau_1,\tau_2,\ldots,\tau_r)$ of $\phi$, up to~(including) $\tau_r$,
    that consists of exactly those swaps~$\tau$ from $\phi_0$ with $\{k_\tau,x_q\}\in \tau$ for some agent~$k_\tau$,
    Define $a_{s+1}=p$ and $a_z$, $1\le z \le s$, such that $\{a_z,x_q\}\in \tau'_z$.
    The following holds.
   \begin{compactenum}[(i)]
     \item \label{swap-p-q:last-swap}
     $\tau'_s=\tau_r$.
     \item\label{swap-p-q:a1} $a_1=q$.

     \item\label{swap-p-q:prefer} For each $z\in \{1,\ldots, s\}$ agent~$a_z$ prefers~$x_{a_{z+1}}$ to~$x_{q}$.
     
      \item\label{swap-p-q:preference-list} For each~$z\in \{2,\ldots, s\}$ agent~$a_{z}$ has preference list $x_{a_{z+1}} \succ x_{q} \succ$\fbox{$x_{a_z}$}.
      

      \item \label{swap-p-q:swap-p-q} $\phi'$ is a valid swap sequence for $\sigma'_0$, where $\sigma'_0(i)=\sigma_0(i)$ for all $i\notin\{1,p\}$ while $\sigma_0'(1)=x_p$ and $\sigma_0'(p)=x_1$.
    
    \item\label{swap-p-q:no-xn} If agent~$1$ prefers $x_n$ to $x_q$, then
    no agent~$a_z$, $3\le z\le s$, contains object~$x_n$ in her preference list.
  \end{compactenum}
  }
  \begin{lemma} \label[lemma]{lem:swap-p-q}
    \swappq
  \end{lemma}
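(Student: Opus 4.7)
The plan is to establish the six sub-claims in order, with one auxiliary observation serving as the backbone for (iii)--(vi). Claim (i) is immediate since $\tau_r$ itself is an $x_q$-swap and is the last element of the prefix $\phi_0$, so it must be the last $x_q$-swap of $\phi'$. Claim (ii) is also direct: $\sigma_0(q)=x_q$, so the first $x_q$-swap $\tau'_1$ must have $q$ as the agent holding $x_q$, giving $a_1=q$.

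The key auxiliary I would prove is: for every $z\in\{1,\ldots,s-1\}$, agent $a_{z+1}$ has not participated in any swap of $\phi$ before $\tau'_z$, and so still holds her initial object $x_{a_{z+1}}$ immediately before $\tau'_z$. The proof is by contradiction using the length-three preference bound: if she had swapped earlier, her preference list would strictly contain the four distinct objects $y'\succ x_q\succ y\succ x_{a_{z+1}}$, with $y$ the object acquired in the earlier swap and $y'$ the one received at $\tau'_{z+1}$, exceeding the bound. With this in hand, claim (iii) follows from rationality: for $z<s$ the partner of $a_z$ in $\tau'_z$ is $a_{z+1}$ holding $x_{a_{z+1}}$, and for $z=s$ the partner is agent $1$ holding $x_p=x_{a_{s+1}}$. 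Claim (iv) then combines (iii) with the same auxiliary applied to $z-1$: this shows $a_z$ still holds her initial object $x_{a_z}$ right before accepting $x_q$ at $\tau'_{z-1}$, so $x_q\succ_{a_z} x_{a_z}$ by rationality; together with $x_{a_{z+1}}\succ_{a_z} x_q$ from (iii) and the pairwise distinctness of these three objects, the length-three bound pins the list of $a_z$ down to exactly $x_{a_{z+1}} \succ x_q \succ \fbox{x_{a_z}}$.

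For claim (v) I would verify inductively that the swaps of $\phi'$ are admitted in turn starting from $\sigma'_0$. The critical point is that every $a_z$ with $z\in\{1,\ldots,s\}$ differs from both $1$ and $p$: it differs from $1$ because agent $1$ receives $x_q$ only at $\tau_r$ (as the partner of $a_s$) and so never appears as an intermediate $x_q$-carrier; it differs from $p$ by a hoarding argument, namely if $a_z=p$ for some $z\geq 2$, then (iv) together with the length-three bound forces some earlier intermediate agent to end up with $x_p$ as her top choice, blocking $x_p$ from ever reaching agent $1$ by step $r$ and contradicting $\tau_r$. With $a_z\notin\{1,p\}$ the initial assignments $\sigma_0$ and $\sigma'_0$ agree on all agents relevant to $\phi'$, so the same partners hold the same objects at the same moments as in $\phi$, and each $\tau'_z$ remains admissible.

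Finally, for claim (vi) assume $x_n\succ_1 x_q$ and fix $z\in\{3,\ldots,s\}$. By (iv), the preference list of $a_z$ is exactly $\{x_{a_{z+1}}, x_q, x_{a_z}\}$, so it suffices to rule out each equality with $x_n$. The case $x_n=x_q$ is excluded by the hypothesis. If $x_n=x_{a_z}$ (i.e.\ $a_z=n$), then at $\tau'_{z-1}$ agent $n$ gives $x_n$ to $a_{z-1}$, whose top choice by (iv) is $x_{a_z}=x_n$, so $a_{z-1}$ retains $x_n$ forever; since $a_{z-1}\neq 1$, agent $1$ can never obtain $x_n$. The case $x_n=x_{a_{z+1}}$ reduces to the previous one when $z+1\leq s$; for $z+1=s+1$ it gives $p=n$, forcing $x_q\succ_1 x_n$ at $\tau_r$ and contradicting the hypothesis. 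The main obstacle throughout is the auxiliary claim about $a_{z+1}$ not swapping earlier: it channels the length-three preference bound into the rigid chain structure of $(a_1,\ldots,a_{s+1})$, from which (iii)--(vi) unfold via rationality and the recurring hoarding phenomenon.
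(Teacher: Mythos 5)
Your proof is correct and follows essentially the same route as the paper's: the length-three bound on preference lists forces each intermediate carrier of $x_q$ to have the rigid list $x_{a_{z+1}}\succ x_q\succ x_{a_z}$, from which all six statements follow by rationality and hoarding arguments. Your forward auxiliary claim (that $a_{z+1}$ has not swapped before $\tau'_z$) is the same structural fact the paper extracts by backward induction in its proof of statement~(iii), and your explicit checks that $a_z\notin\{1,p\}$ in~(v) and the full range $3\le z\le s$ in~(vi) only make explicit points the paper leaves implicit.
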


    \begin{proof}
    Statement~\eqref{swap-p-q:last-swap}: By definition, the last swap~$\tau_r$ in $\phi_0$ contains $\{j,x_q\}$ for some agent~$j$, since $\phi'$ contains exactly those swaps from $\phi_0$ that involves swapping object~$x_q$, it follows that the last swap in $\phi'$ must be $\tau_r$.

    Statement~\eqref{swap-p-q:a1} holds because agent~$q$ initially holds object~$x_q$ and thus, in order to make object~$x_q$ reach agent~$1$, agent~$q$ is the first agent to give away object~$x_q$.

   Now, we turn to statement~\eqref{swap-p-q:prefer}.
   Note that all agents~$a_z$ are distinct because by the definition of rational trades no agent will give away the same object~$x_q$ more than once. 
    We show statement~\eqref{swap-p-q:prefer} by induction on $z$, $1\le z\le s$, starting with $z=s$.

    First of all, by the first statement, 
    we know that $\tau'_s=\tau_r$,
    i.e, in this swap~$\tau'_s$ agent~$a_{s}$ swaps with agent~$1$ over objects~$x_q$ and $x_p=x_{a_{s+1}}$.
    Since $\tau_s'$ is also a rational trade, this means that agent~$a_{s}$ must prefer object~$x_{a_{s+1}}$ to object~$x_q$.

    For the induction assumption, let us assume that for each~$i \ge z$, agent~$a_i$ prefers $x_{a_{i+1}}$ to $x_q$.
    Now, we consider agent~$a_{z-1}$, and we aim to show that $a_{z-1}$ prefers $x_{a_{z}}$ to $x_q$.
    By definition, we know that
     $\tau'_{z-1}=\{\{a_{z-1},x_q\},\{a_z,y\}\}$ for some object~$y$,
     i.e., agent~$a_{z-1}$ gives object~$x_q$ to agent~$a_{z}$ in order to obtain another object~$y$.
    Thus, agent~$a_{z-1}$ must prefer~$y$ to~$x_q$ and agent~$a_z$ must prefer ~$x_q$ to~$y$.
    Since each agent has at most three objects in her preference list and since by our induction assumption we know that $a_z$ already prefers $x_{a_{z+1}}$ to $x_q$
    we infer that that $y$ is the initial object of agent~$a_z$, that is, $y=x_{a_z}$.

    Next, to show statement~\eqref{swap-p-q:preference-list}, let us re-examine the preferences of agents~$a_z$, $1\le z \le s$ implied by statement~\eqref{swap-p-q:prefer}.
    Since each $a_z$, $1\le z\le s$, prefers $x_{a_{z+1}}$ to $x_q$,
    the preference list of each agent~$a_{z}$, $2\le z\le s$, is 
    $x_{a_{z+1}} \succ x_q \succ$ \fbox{$x_{a_z}$}; again, recall that each agent has at most three objects in her preference list.

    Now, we show that $\phi'$ is a valid swap sequence for $\sigma'_0$, i.e.,
    there exists a swap sequence~$(\rho_0,\rho_1,\rho_2,\ldots,\rho_s)$, a sequence of assignments, such
    that for $\rho_0=\sigma'_0$ and for each $z\in \{1,\ldots, s\}$ it holds that $\rho_{z}=\rho_{z-1}/\tau'_z$.   
    We prove this by showing the following 
    \begin{compactenum}[(1)]
      \item $\sigma'_0$ is an assignment and admits swap~$\tau'_1$.
      \item  for each $z\in \{1,\ldots, s-1\}$ if $\sigma'_{z-1}$ is an assignment 
      and admits swap~$\tau'_{z}$,
      then $\sigma'_{z}\coloneqq \sigma'_{z-1}/\tau'_{z}$ is also an assignment 
      and admits swap~$\tau'_{z+1}$,
      \item $\sigma'_{s}\coloneqq \sigma'_{s-1}/\tau'_{s}$ is an assignment.
    \end{compactenum}

    Clearly,~$\sigma'_0$ is an assignment and admits swap~$\tau'_1$, which is $\tau'_1=\{\{a_1,x_q\}, \{a_2,x_{a_2}\}\}$.
    This means that $\sigma'_1$, defined as $\sigma'_1\coloneqq \sigma'_0/\tau'_1$,
    is an assignment.
    Now, consider~$z\in\{1,\ldots,s-2\}$, and assume that $\sigma'_{z-1}$ is an assignment and admits swap~$\tau_{z}$, which is $\tau_z=\{\{a_z,x_q\}, \{a_{z+1},x_{a_{z+1}}\}\}$.
    Thus,~$\sigma'_{z}\coloneqq \sigma'_{z-1}/\tau'_{z}$ is an assignment.
    By the definitions of $T\coloneqq \{\tau'_1,\ldots, \tau'_{z}\}$, we infer that $\sigma'_z(a_{z+1})=x_q$ and $\sigma'_{z}(a_{z+2})=x_{a_{z+2}}$; the latter holds because no swap from $T$ has involved agent~$a_{z+2}$.
    Thus, by the preference lists of $a_{z+1}$ and $a_{z+2}$ it holds that $\sigma'_{z}$ admits swap~$\tau'_{z+1}$, which is $\tau'_{z+1}=\{\{a_{z+1},x_q\},\{a_{z+2}, x_{a_{z+2}}\}\}$.
    Finally, we obtain that $\sigma'_{s-1}\coloneqq \sigma'_{s-2}/\tau_{s-1}$ is an assignment
    such that $\sigma'_{s-1}(a_{s-1}) =x_q$ and $\sigma'_{s-1}(1)=x_{a_{s+1}}=x_p$.
    By the preference list of agent~$1$ and $a_{s-1}$, it is clear that $\sigma'_{s-1}$ admits swap~$\tau'_s$.
    Define $\sigma'_s=\sigma'_{s-1}/\tau_s$ and it is clear it is an assignment.
    Concluding,~$\phi'$ is a valid swap sequence for $\sigma'_0$. 

    Finally, we show statement~\eqref{swap-p-q:no-xn}.
    Assume that agent~$1$ prefers object~$x_n$ to object~$x_q$, implying that $x_n\neq x_q$.
    We assume that $s-1\ge 4$ as otherwise the set~$\{3,\ldots,s-1\}$ is empty and we are done with the statement.
    Suppose, for the sake of contradiction, that some agent~$a_z$ with $3\le z\le s-1$ has object~$x_n$ in her preference list.
    By the preference list of $a_z$ we infer that the object~$x_n$ is either $x_{a_z}$ or $x_{a_{z+1}}$ because $x_n\neq x_q$.
    If $x_n=x_{a_z}$, then after the swap~$\tau'_z$, agent~$a_{z-1}$ will obtain $x_n$, which is her most preferred object---a contradiction to agent~$1$ receiving object~$x_n$ after $\phi'$.
    If $x_n=x_{a_{z+1}}$, then after the swap~$\tau'_{z+1}$, agent~$a_{z}$ will obtain $x_n$, which is her most preferred object---a contradiction to agent~$1$ receiving object~$x_n$ after $\phi'$.
\end{proof}

Now, we are ready to give the main algorithm for the case where the length of the preference list of each agent is bounded by three based on solving reachability in a directed graph (\cref{alg:length=3}).

\renewcommand\ArgSty{\normalfont}

\begin{algorithm}[t]
  \DontPrintSemicolon
  \caption{Algorithm for \ROs{} with preference list length at most three.}
  \label[algorithm]{alg:length=3}
  \small
  \SetKwInOut{Input}{Input}
  \Input{\small Agent set~$V$ with preference lists~$(\succ_i)_{i\in V}$ over object set~$X$, and the underlying graph~$(V,E)$}

  $D\!\coloneqq\! (V,F)$ with $F\!\coloneqq\! \{(i,j)\! \mid\! \{i,j\} \in E \wedge x_j \succ_i x_n \wedge x_n \succ_j x_j\}$.\label{line:D}

  \label{line:n-1}  \If{$D$ admits a directed path~$P$ from $n$ to $1$}
  {\label{line:path-n-1}  \textbf{return} yes}

    \lIf{$x_n\succ_1 x_w \succ_1 $ \fbox{$x_1$} for some~$x_w\neq x_n$}
    {
      $D_1\coloneqq (V,F_1)$ with $F_1\coloneqq \{(i,j) \mid \{i,j\} \in E \wedge x_j \succ_i x_w \wedge x_w \succ_j x_j\}$.\label{line:D1}


     \ForEach{Directed path~$P_1=(w,n,a_1,\ldots,a_s,1)$ from $w$ to $1$ in $D_1$ such that the first arc on~$P_1$ is $(w,n)$\label{line:D1-wn}}{
       $D_2\coloneqq D-\{a_1,\ldots,a_s\}+\{(j,1)\mid x_w \succ_j x_n\}$
       \label{line:D2}

        \lIf{$D_2$ admits a directed path~$P_2$ from $n$ to $1$ such that the first arc on $P_2$ is $(n,w)$}
        {\textbf{return} yes \label{line:1-w-n}}
      }
      \ForEach{Directed path~$P_1=(w,a_1,\ldots,a_s,1)$ from $w$ to $1$ in $D_1$ such that the first arc on~$P_1$ is \emph{not} $(w,n)$, i.e., $a_1\neq n$}
      {
       $D_3\coloneqq D-\{w,a_1,\ldots,a_s\}+\{(j,1)\mid x_w \succ_j x_n\}$
        \label{line:D3}

        \lIf{$D_3$ admits a directed path~$P_3$ from $n$ to $1$ such that the first arc on $P_3$ is \emph{not} $(n,w)$}
        {\textbf{return} yes        \label{line:1-no-w-n}}    }
  }
  \textbf{return} no
\end{algorithm}

\newcommand{\lengththree}{%
\ROs{} for preference list length at most three can be solved in~$O(n+m)$~time, where~$m$ is the number of edges in the underlying graph.%
}
\begin{theorem}\label{thm:preflength-three}
  \lengththree
\end{theorem}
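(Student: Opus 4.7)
The plan is to separately establish soundness, completeness, and an $O(n+m)$ running time for Algorithm~\ref{alg:length=3}. The algorithm exploits the length-three restriction by reducing \ROs{} to a constant number of reachability queries in auxiliary directed graphs whose arcs encode the local swaps that can forward a specific ``token'' object to the next agent.

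Soundness amounts to translating each yes-branch of the algorithm into an explicit valid swap sequence. An arc $(i,j)\in F$ precisely witnesses a rational swap between $i$ and $j$ under the assumption that $i$ holds $x_n$ and $j$ still holds her initial object $x_j$; hence an $n$-to-$1$ directed path in $D$ directly unfolds into a swap sequence forwarding $x_n$ along the path, each interior agent keeping her initial object until her turn. The more elaborate branches first execute $P_1$ in $D_1$ to bring the ``middle'' object $x_w$ (uniquely determined by the length-three restriction on agent~$1$'s preferences, since $x_n\succ_1 x_w\succ_1 x_1$) to agent~$1$, and then execute $P_2$ or $P_3$ to deliver $x_n$. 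The agent-deletion ``$-\{a_1,\ldots,a_s\}$'' reflects that the interior agents of $P_1$ have already given up their initial objects, while the extra arcs ``$+\{(j,1)\mid x_w\succ_j x_n\}$'' reflect that agent~$1$ now holds $x_w$, so the final swap into~$1$ must be rational with respect to $x_w$ rather than $x_1$. The first-arc constraints on $P_2,P_3$ track whether agent~$n$ ended phase one holding $x_w$ (so her role in phase two starts with swapping $x_n$ back for $x_w$) or not.

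For completeness, assume a valid swap sequence brings $x_n$ to agent~$1$, and apply Lemma~\ref{lem:swap-p-q} to the final swap delivering $x_n$, with $x_q=x_n$. Statements~(iii) and~(iv) force the chain of agents carrying $x_n$ to correspond to a directed path in $D$ ending at agent~$1$. If the object $x_p$ that agent~$1$ gives up equals $x_1$, this is already an $n$-to-$1$ path in $D$ and the first test succeeds. Otherwise the length-three restriction on agent~$1$'s list forces $x_p=x_w$ with $x_n\succ_1 x_w\succ_1 x_1$, and a second application of Lemma~\ref{lem:swap-p-q}, now with $x_q=x_w$, extracts a second chain that is a directed path in $D_1$. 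Split into the two sub-cases according to whether agent~$n$ is the second vertex of this $x_w$-chain, matching the $(w,n)$-first-arc dichotomy in the algorithm. Statement~(vi) applied with $x_q=x_w$ rules out $x_n$ from appearing in the preference list of any interior agent of the $x_w$-chain beyond the first hop, which is precisely what permits the deletion $\{a_1,\ldots,a_s\}$ when defining $D_2$ or $D_3$ without losing any arc needed by the $x_n$-chain.

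The main obstacle will be the interlocking of the two chains in the completeness argument: one must check that after the $x_w$-chain has been executed, the resulting configuration of objects really admits the $x_n$-chain encoded by $D_2$ or $D_3$, that the new arcs to~$1$ faithfully capture the post-phase-one state of agent~$1$, and that the first-arc constraints track the post-phase-one state of agent~$n$. The running-time analysis is comparatively routine: each of $D,D_1,D_2,D_3$ has $O(n+m)$ size and is built in $O(n+m)$ time, and $s$-to-$t$ reachability with a prescribed first arc is a single BFS. The ``for each directed path $P_1$'' loops are interpreted not as enumerations but as existential checks with a fixed first-arc pattern, since any yes-instance is witnessed by a canonical (say BFS-shortest) $P_1$; this collapses the outer loops to a constant number of linear-time reachability queries and yields the claimed $O(n+m)$ bound.
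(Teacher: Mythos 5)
Your proposal follows essentially the same route as the paper: both directions hinge on \cref{lem:swap-p-q}, applied with $x_q=x_n$ and then (in the $x\neq x_1$ case) a second time with $x_q=x_w$, with statement~(vi) supplying the disjointness of the two chains that licenses deleting $\{a_1,\ldots,a_s\}$ when forming $D_2$ and $D_3$, and the soundness direction sketched at the same level of detail as the paper's. The one soft spot is your appeal to a ``canonical BFS-shortest $P_1$'' to collapse the path enumeration --- that claim needs justification because different choices of $P_1$ yield different graphs $D_2/D_3$; the correct fix is that every vertex other than $w$ has out-degree at most one in $D_1$ (a length-three list containing both $x_w$ and the agent's initial object leaves room for only one more object), so there are at most two $w$-to-$1$ paths, one per admissible first arc, and the loop is genuinely constant-size.
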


  \begin{proof}[Proof sketch.]
  We claim that \cref{alg:length=3} solves our problem in linear time.
  Assume that object~$x_n$ is reachable for agent~$1$, i.e., there exists a valid swap sequence~$\phi=(\tau_1,\tau_2,\ldots,\tau_m)$ for $\sigma_0$ such that $\tau_m=\{\{1,x\}, \{j,x_n\}\}$ for some object~$x$ which exists in the preference list of $1$ and some agent~$j$ that has $x_n$ in her preference list.
  There are two cases for $x$: either $x=x_1$ or $x\neq x_1$.
  If $x=x_1$, then 
  using $x_p=x_1$ and $x_q=x_n$, the sequence~$\phi'$ as defined in~\cref{lem:swap-p-q}
  is a valid swap sequence for $\sigma'_0$ with $\sigma'_0=\sigma_0$.
  By the properties in \cref{lem:swap-p-q}\eqref{swap-p-q:preference-list}, graph~$D$ as constructed in Line~\ref{line:D} must contain a path from $n$ to $1$.
  Thus, by Line~\ref{line:path-n-1} our algorithm returns yes.

  If $x\neq x_1$, implying that the preference list of agent~$1$ is $x_n \succ x_w \succ $ \fbox{$x_1$} for some object~$x_w$ such that $x=x_w$,
  then $\phi$ has a swap~$\tau_r$ such that $\tau_r=\{\{1,x_1\}, \{k,x_w\}\}$ for some agent~$k$.
  By \cref{lem:swap-p-q} (using $x_p=x_1$ and $x_q=x_w$),
  the sequence~$\phi'=(\tau'_1,\ldots,\tau'_s)$ as defined in \cref{lem:swap-p-q} is a valid swap sequence for $\sigma'_0$ with $\sigma'_0=\sigma_0$.
  Let $a_z$, $1\le z \le s$, be the agents such that $\{a_z,x_w\} \in \tau'_z$.
  By \cref{lem:swap-p-q}\eqref{swap-p-q:a1}, we know that $\tau'_1=\{\{w,x_w\},\{a_2,x_{a_2}\}\} = \{\{a_1,x_{a_1}\},\{a_2,x_{a_2}\}\}$.
  
  There are two cases for $a_2$: either $a_2=n$ or $a_2\neq n$.
  If $a_2=n$, then by the properties in \cref{lem:swap-p-q}\eqref{swap-p-q:preference-list},
  it follows that $\phi'$ defines a directed path~$(a_1,a_2),\ldots,(a_{s-1},a_s),(a_s,1)$ in $D_1$ (Line~\ref{line:D1}) with $(a_1,a_2)=(w,n)$.
  By $\sigma'_0$ and \cref{lem:swap-p-q}\eqref{swap-p-q:preference-list}, we have that $(a_1,a_2),(a_2,a_3), \ldots, (a_{s-1},a_s), (a_s,1)$ is a directed path in graph~$D_1$ as defined in Line~\ref{line:D1} such that $(a_1,a_2) = (w,n)$ so that the if condition from Line \ref{line:D1-wn} holds.
  By \cref{lem:swap-p-q}\eqref{swap-p-q:no-xn}, no agent from $\{a_3,a_4,\ldots,a_s\}$ is involved in a swap from $\phi$ which includes $x_n$. 
  
  By the above properties of $\tau_m$ and $\phi$, it follows that using $x_p=x_w$ and $x_q=x_n$,
  the sequence~$\phi''=(\rho_1,\rho_2,\ldots,\rho_{t})$ as defined in \cref{lem:swap-p-q} is a valid swap sequence for 
  $\sigma''_0$ with $\sigma''_0(1)=x_w$, $\sigma''_0(w)=x_1$, and for all $i\notin \{1,w\}$, $\sigma''_0(i)=\sigma_0(i)$.
  Recall that $\phi'$ contains exactly those swaps from $\phi_0$ which involve swapping object~$x_w$.
  Thus, it must hold that $\rho_1=\tau'_1$.
  Let $b_z$, $1\le z\le t$, be the agents such that $\{b_z,x_n\}\in \rho_z$.
  Then, $b_1=n$ and $b_2=w$ and no agent from $\{b_3,\ldots,b_t\}$ is from $\{a_3,\ldots,a_s\}$.
  Thus, by $\sigma''_0$ and \cref{lem:swap-p-q}\eqref{swap-p-q:preference-list}, we have that $(b_1,b_2),(b_2,b_3), \ldots, (b_{t-1},b_t), (b_t,1)$ is a directed path in graph~$D_2$ as defined in Line~\ref{line:D2} such that $(b_1,b_2)=(n,w)$.
  Indeed, by Line \ref{line:1-w-n}, our algorithm returns yes.
  
  Now, we turn to the other case, namely, $a_2\neq n$.
  Again, by the properties of $\tau_m$ and $\phi$, it follows that, using $x_p=x_w$ and $x_q=x_n$,
  the sequence~$\phi''=(\rho_1,\rho_2,\ldots,\rho_{t})$ as defined in \cref{lem:swap-p-q} is a valid swap sequence for 
  $\sigma''_0$ with $\sigma''_0(1)=x_w$, $\sigma''_0(w)=x_1$, and for all $i\notin \{1,w\}$, $\sigma''_0(i)=\sigma_0(i)$.
  Let $b_z$, $1\le z\le t$, be the agents such that $\{b_z,x_n\}\in \rho_z$.
  
  We aim to show that no swap from $\phi$ that involves swapping object~$x_n$ will involve agent~$a_z$, $1\le z\le s$, i.e.,
  $\{a_1,\ldots,a_s\}\cap \{b_1,\ldots,b_t\}=\emptyset$.
  By $\phi'_0$ and the properties in \cref{lem:swap-p-q}\eqref{swap-p-q:preference-list},
  it follows that $\phi'$ defines a directed path~$(a_1,a_2),\ldots,(a_{s-1},a_s),(a_s,1)$ in $D_1$ (Line~\ref{line:D1}) such that $a_2 \neq n$.
  By \cref{lem:swap-p-q}\eqref{swap-p-q:preference-list} and \cref{lem:swap-p-q}\eqref{swap-p-q:no-xn}, no agent from $\{a_2,a_3,a_4,\ldots,a_s\}$ is
  involved in any swap from $\phi$ which includes object~$x_n$. 
  Neither will agent~$a_1$ be involved in any swap from $\phi$ which includes object~$x_n$ because of the following.
  After swap~$\tau'_1$ agent~$a_1$ obtains object~$x_{a_2}$ which is not $x_n$, so if she would be involved in swapping $x_n$, then she would obtain $x_n$ as her most preferred object and never give away $x_n$.
  Thus, indeed we have   $\{a_1,\ldots,a_s\}\cap \{b_1,\ldots,b_t\}=\emptyset$.
  By $\sigma''_0$  and \cref{lem:swap-p-q}\eqref{swap-p-q:preference-list}, we have that $(b_1,b_2),(b_2,b_3), \ldots, (b_{t-1},b_t), (b_t,1)$ is a directed path in graph~$D_3$ as defined in Line~\ref{line:D3} such that $(b_1,b_2)\neq (n,w)$.
  Indeed, by Line \ref{line:1-no-w-n}, our algorithm returns yes.

  For the converse direction, if our algorithm return yes, then one of the three lines \ref{line:path-n-1}, \ref{line:1-w-n}, \ref{line:1-no-w-n}, returns yes.
  It is not hard to check that the corresponding constructed path(s) indeed define(s) a desired valid swap sequence.
  As for the running time, our algorithm constructed at most four directed graphs~$D, D_1, D_2$, and~$D_3$, each of them with $O(n)$~arcs.
  We only show how to construct $D$ in $O(n+m)$~time, the construction for $D_1$ is analogous and~graphs~$D_2$ and $D_3$ are derived from $D$.

  To construct $D=(V,F)$, we go through each edge~$\{i,j\}$ with $i\neq n$ and $j \neq n$ in the underlying graph and do the following in $O(1)$ time; we tackle agent~$n$ separately.
  \begin{compactitem}
    \item Check whether the preference list of~$i$ is $x_j \succ x_n \succ $\fbox{$x_i$} and~$j$ prefers~$x_n$ over~$x_j$. If so, we add the edge~$(i,j)$.
    \item Check whether the preference list of~$j$ is $x_i \succ x_n \succ $\fbox{$x_j$} and~$i$ prefers~$x_n$ over~$x_i$. If so, we add the edge~$(j,i)$.
  \end{compactitem}
  Now, we consider agent~$n$.
  For each object~$x_j$ (there are at most two of them) that agent~$n$ prefers to her initial object~$x_n$, we do the following in $O(m)$~time.
  \begin{compactitem}
    \item Check whether $\{n,j\} \in E$.
    \item Check whether $j$ prefers $x_n$ to $x_j$.
  \end{compactitem}
  Add the arc~$(n,j)$ to~$D$ only if the above two checks are positive.
  In this way, each vertex from $V\setminus \{n\}$ has at most two in-arcs and at most one out-arc and vertex~$n$ has at most two out-arcs but no in-arcs.
  Thus, $|F| \in O(n)$.

  For each of the three graphs, each with $O(n)$ arcs, checking whether the specific path stated in the corresponding line exists can be done in $O(n)$~time. 
  In total, the algorithm runs in $O(n+m)$~time.
\end{proof}

\section{Paths}\label{sec:paths}

In this section we prove that \ROs{} on paths is solvable in~$O(n^4)$ time.
This answers an open question by \citet{GouLesWil2017}.
The proof consists of two phases.
In the first phase, we analyze sequences of swaps in paths and observe a crucial structure regarding possible edges along which a pair of objects can be traded.
In the second phase, we use this structure to reduce \ROs{} on paths to \textsc{2-SAT}.

Let~$P = (\{1,2,\ldots, n\},\{\{i,i+1\}\mid 1\le i < n\})$ be a path, and w.l.o.g.\ let~$\sigma_0(n) = x$ (see \Cref{fig:exampleInput}).
Note that if there are some agents~$n+1,\ldots$, then their initially assigned objects can never be part of any swap that is necessary for~$x$ to reach~$I$.
If such a swap was necessary, then there would be a swap that moves~$x$ away from~$I$, that is, to an agent with a higher index, and therefore the agent that gave away~$x$ now possesses an object that she prefers over~$x$ and she will never accept~$x$ again.
Thus, object~$x$ could never reach~$I$.

In the following,``an object~$y$ moves to the right''  means that object~$y$ is given to an agent with a higher index than the agent that currently holds it.
An object~$z$ is ``to the left (resp.\ right) of some other object~$a$'' when
the agent that initially holds object~$z$ has a smaller index than the agent that initially holds object~$a$.

\begin{figure}
  \centering
\begin{tikzpicture}[scale=0.8, every node/.style={scale=0.9}]
\node[circle,draw, label=below:1] at (0,0) (v1) {};
\node[circle,draw, label=below:2] at (1,0) (v2) {};
\node at (2,0) (v3) {$\cdots$};
\node[circle,draw, label=below:$I$] at (3,0) (v4) {};
\node[circle,draw, label=below:$I+1$] at (4,0) (v5) {};
\node[circle,draw, label=below:$I+2$] at (5,0) (v6) {};
\node at (6,0) (v7) {$\cdots$};
\node[circle,draw, label=below:$n-1$] at (7,0) (v8) {};
\node[circle,draw, label=below:$n$, label=above:$x$] at (8.5,0) (v9) {};

\draw (v1) -- (v2);
\draw (v2) -- (v3);
\draw (v3) -- (v4);
\draw (v4) -- node[above] {1} (v5);
\draw (v5) -- node[above] {2} (v6);
\draw (v6) -- node[above] {$\cdots$} (v7);
\draw (v7) -- (v8);
\draw (v8) -- node[above] {$n - I$} (v9);
\end{tikzpicture}
\caption{An example of a path where the name of the agent is denoted below its vertex.
  The target object~$x$ is the initial object of~$n$
  and is depicted above the corresponding vertex~$n$.
  The edges to the ``right'' of~$I$ are enumerated in order to define types of objects.}
\label{fig:exampleInput}
\end{figure}
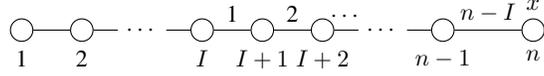

We start with a helpful lemma that will be used multiple times.
It states that for each pair of objects there is at most one edge along which these two objects can be swapped in order to pass each other.

\newcommand{\swappos}{
For each agent~$i$ and each two distinct objects object~$w$ and $y$
there exists at most one edge, denoted as $\{j,j+1\}$, 
such that for each sequence~$\xi$ of swaps
the following statement holds:
If after~$\xi$ agent~$i$ holds object~$w$, and if objects~$w$ and~$y$ are swapped in~$\xi$, then objects~$w$ and~$y$ are swapped over the edge~$\{j,j+1\}$.
Deciding whether such an edge exists and computing it if it exists takes~$O(n)$ time.

}
\begin{lemma}
  \label{lem:swapPos}
  \swappos
\end{lemma}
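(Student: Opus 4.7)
The proof would rest on the monotonicity of object trajectories on paths: since every object can traverse any given edge at most once, reversing direction would force the object to re-use the just-crossed incident edge. Hence each object moves in at most one direction, and in particular the trajectory of~$w$ is forced to be the unique sub-path of the host path from~$a_w\coloneqq\sigma_0^{-1}(w)$ to~$i$, traversed monotonically. If $a_w=i$, or if $a_y\coloneqq\sigma_0^{-1}(y)$ lies on the wrong side of~$a_w$ for the two monotone trajectories to ever cross, then no $w$-$y$ swap compatible with $w$ ending at~$i$ can occur and the lemma is vacuously true.

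Assume without loss of generality $a_w<i$ and $a_y>a_w$. At the swap of~$w$ and~$y$ over some edge~$\{j,j+1\}$ the agent at~$j+1$ is about to receive~$w$ in exchange for~$y$, forcing $w\succ_{j+1}y$; symmetrically the agent at~$j$ satisfies $y\succ_{j}w$; and the preferences of the intermediate agents must allow~$w$ to reach position~$j$ from~$a_w$ and~$y$ to reach~$j+1$ from~$a_y$. I would show that the right endpoint~$j+1$ is forced to be the smallest index $k\in(a_w,\min(a_y,i)]$ satisfying these conditions. Suppose for contradiction two such indices $k_1<k_2$ admitted valid sequences. Agent~$k_2$, hosting the swap in the sequence using $\{k_2-1,k_2\}$, must satisfy $y\succ_{k_2}w$. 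In the other sequence (using $\{k_1-1,k_1\}$), $w$ must subsequently move past agent~$k_2$ on its way to~$i$; tracing the convoy of objects pushed rightward by $y$'s earlier leftward passage---each of which is itself monotone and so cannot loop back through any of its already-used edges---and invoking the edge-once constraint, one forces $w\succ_{k_2}y$, contradicting the other requirement. Hence at most one edge satisfies the lemma's property.

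For the $O(n)$ computation, a preliminary linear pass over~$\sigma_0$ recovers~$a_w$ and~$a_y$. Then a single scan of the agents $k=a_w+1,a_w+2,\ldots,\min(a_y,i)$ suffices: we check in $O(1)$ whether~$k$'s preference list admits $w\succ_k y$ and maintain in amortised $O(1)$ per step two running flags that record whether the preferences of the already-scanned left-of-$k$ agents let~$w$ reach~$k-1$ and whether those of the right-of-$k$ agents let~$y$ reach~$k$. The first~$k$ passing all tests gives the output edge $\{k-1,k\}$; if none does, we report that no such edge exists. The main technical obstacle of the proof is the cascade argument underpinning uniqueness, where one must show that the rightward displacements triggered by~$y$'s motion cannot be re-routed to accommodate a second candidate edge without eventually re-using some edge; monotonicity of every displaced object in the convoy is what closes off every apparent escape.
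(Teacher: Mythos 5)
Your overall strategy---monotone trajectories on a path, so the swap edge is pinned down by the $w$-versus-$y$ preferences of the agents that must handle both objects, with uniqueness obtained by deriving contradictory preference requirements for an agent lying between two candidate edges---is essentially the paper's argument. However, the step that is supposed to deliver the contradiction fails as written. In your orientation ($w$ travels right towards $i$, $y$ travels left), the agent at the \emph{right} endpoint $k_2$ of the edge $\{k_2-1,k_2\}$ \emph{receives} $w$ in exchange for $y$, so she must satisfy $w\succ_{k_2}y$ --- exactly as you yourself state two sentences earlier --- and not $y\succ_{k_2}w$. Consequently the relation $w\succ_{k_2}y$ that you extract from the other sequence contradicts nothing: both sequences impose $w\succ_{k_2}y$ on agent $k_2$. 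The contradiction lives one position to the left. Agent $k_2-1$ must prefer $y$ to $w$ in order to carry out the swap over $\{k_2-1,k_2\}$; but in the sequence that swaps over $\{k_1-1,k_1\}$ with $k_1\le k_2-1$, agent $k_2-1$ first holds $y$ (on its leftward passage from $a_y$ down to $k_1$) and only later holds $w$ (on its rightward passage from $k_1$ up to $i$), and since every rational trade strictly improves an agent's object, this forces $w\succ_{k_2-1}y$. No ``convoy'' of displaced objects and no edge-once accounting is needed here: the monotone improvement of the single agent $k_2-1$'s held object already closes the argument, which is precisely the role played by the agent $b$ in the paper's proof.

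A secondary weakness concerns the characterization and the algorithm. You assert that the edge is the \emph{smallest} $k$ passing two local preference tests plus unspecified ``reachability'' flags, one of which concerns agents to the right of $k$ and yet is supposedly maintained during a left-to-right scan. The needed condition is that every agent strictly between $k$ and $\min(a_y,i)$ prefers $w$ to $y$ (each such agent holds $y$ before $w$); equivalently, $k-1$ is the largest index below $\min(a_y,i)$ preferring $y$ to $w$, which is how the paper locates the edge via the agent $b$ after first identifying the pivot agent $e\in\{i,\,\text{initial holder of }y\}$. Without that condition, the smallest $k$ satisfying only $w\succ_k y$ and $y\succ_{k-1}w$ can lie strictly to the left of the true swap edge (take preferences, from $a_w$ rightwards, of the form $y,w,y,w,\dots$), so your scan would return a wrong edge. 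Once the condition is stated explicitly, exactly one $k$ satisfies it, the ``smallest'' qualifier is redundant, and the $O(n)$ bound is immediate from one right-to-left pass followed by one left-to-right pass.
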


\begin{proof}
Let~$a$ be the agent that initially holds~$y$ and let~$c$ be the agent that initially holds~$w$.
Assume without loss of generality that $a < c$.
By the definition of rational trades, no agent takes an object back that she gave away before.
Hence, if~$w$ and~$y$ are swapped at some point, then they need to travel ``towards and never away from each other'' in the path before the swap occurs.
Thus, we only need to consider the set of agents~$M = \{d \mid a \leq d \leq c\}$ as potential candidates for~$j$ and~$j+1$.
We first find an agent~$e$ that has to get object~$y$ before she can get object~$w$.
We consider two cases: Either~$i \in M$ or~$i \notin M$ (see \Cref{fig:casedistinction}).

\noindent \textbf{Case~1: $i\notin M$.}~That is, $i < a$ because $w$ and $y$ are swapped before $w$ reaches $i$, and~$M$ contains all agents on a connected subpath including agent~$c$.
Then,~$w$ needs to pass agent~$a$ at some point in order for~$w$ to reach~$i$. Let $e=a$. 

\noindent \textbf{Case~2: $i\in M$.} That is, agent~$i$ is initially between objects~$w$ and~$y$.
Since we are only interested in sequences of swaps until agent~$i$ gets object~$w$, we can assume w.l.o.g.\ that~$j \geq i$.
We assume further that agent~$i$ prefers~$w$ over~$y$ as she would otherwise not accept~$w$ after giving away~$y$.
Thus, agent~$i$ is the agent~$e$ we are looking for.

Next we will show that there is only one edge to the right of agent~$e$ where~$w$ and~$y$ can be swapped.
Consider the agent~$b \in M$ such that all agents in the set~$M_b = \{d \mid e < d < b\}$ prefer~$w$ over~$y$ and~$b$ prefers~$y$ over~$w$.
Note that~$b$ cannot obtain object~$y$ before object~$w$ as she would not accept~$w$ at any future point in time and therefore~$w$ will never pass~$b$.
Thus, it holds that~$j+1 \leq b$.
Any agent in~$M_b$ prefers~$w$ over~$y$ and therefore would not agree on swapping~$w$ for~$y$.
Thus, it holds that~$j+1 \geq b$ and therefore~$j+1 = b$.

Observe that the agent~$b$ can be computed in linear time and is always the only candidate for being agent~$j+1$ we are looking for.
\end{proof}

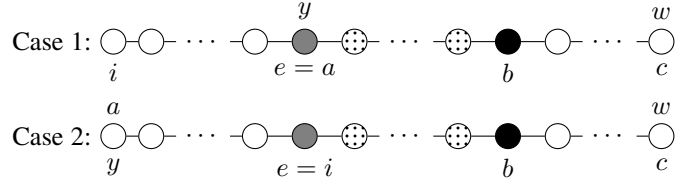
\begin{figure}
\centering
  \def \dist {1ex}
\begin{tikzpicture}
\node[draw,circle, label=below:{$i$}, fill=white] at (0,0) (i) {};
\node[left = 0pt of i] {Case~1:~};
\node[draw,circle, right = \dist of i]  (dummy4) {};
\node[right = \dist of dummy4] (dummy3) {$\cdots$};
\node[draw,circle, right = \dist of dummy3] (dummy2) {};
\node[draw,circle, label=below:{$e=a$}, label=above:$y$, fill=black!50!white, right = \dist*2 of dummy2]  (v1) {};
\node[draw,circle, pattern=dots, right = \dist*2 of v1]  (v2) {};
\node[right = \dist of v2]  (v3) {$\cdots$};
\node[draw,circle, pattern=dots, right = \dist of v3]  (v4) {};
\node[draw,circle, label=below:{$b$}, fill=black, right = \dist*2 of v4] (v5) {};
\node[draw,circle, fill=white, right = \dist*2 of v5]  (v6) {};
\node[right = \dist of v6] (v7) {$\cdots$};
\node[draw,circle, label=below:$c$, label=above:$w$, right = \dist of v7] (v8) {};

\draw (dummy4) -- (i);
\draw (dummy3) -- (dummy4);
\draw (dummy3) -- (dummy2);
\draw (dummy2) -- (v1);
\draw (v1) -- (v2);
\draw (v2) -- (v3);
\draw (v3) -- (v4);
\draw (v4) -- (v5);
\draw (v5) -- (v6);
\draw (v6) -- (v7);
\draw (v7) -- (v8);
\end{tikzpicture}

\begin{tikzpicture}
\node[draw,circle, label=below:{$y$}, label=above:$a$, fill=white] at (0,0) (i) {};
\node[left = 0pt of i] {Case~2:~};
\node[draw,circle, right = \dist of i]  (dummy4) {};
\node[right = \dist of dummy4] (dummy3) {$\cdots$};
\node[draw,circle, right = \dist of dummy3] (dummy2) {};
\node[draw,circle, label=below:{$e=i$},  fill=black!50!white, right = \dist*2 of dummy2]  (v1) {};
\node[draw,circle, pattern=dots, right = \dist*2 of v1]  (v2) {};
\node[right = \dist of v2]  (v3) {$\cdots$};
\node[draw,circle, pattern=dots, right = \dist of v3]  (v4) {};
\node[draw,circle, label=below:{$b$}, fill=black, right = \dist*2 of v4] (v5) {};
\node[draw,circle, fill=white, right = \dist*2 of v5]  (v6) {};
\node[right = \dist of v6] (v7) {$\cdots$};
\node[draw,circle, label=below:$c$, label=above:$w$, right = \dist of v7] (v8) {};

\draw (dummy4) -- (i);
\draw (dummy3) -- (dummy4);
\draw (dummy3) -- (dummy2);
\draw (dummy2) -- (v1);
\draw (v1) -- (v2);
\draw (v2) -- (v3);
\draw (v3) -- (v4);
\draw (v4) -- (v5);
\draw (v5) -- (v6);
\draw (v6) -- (v7);
\draw (v7) -- (v8);
\end{tikzpicture}


\caption{The two cases considered in the proof of \cref{lem:swapPos}. The agent colored in black, $b$, prefer~$y$ to~$w$ while ``dotted'' agents prefer~$w$ over~$y$. The agent~$e$ in the proof is colored gray and also prefers~$w$ over~$y$.}
\label{fig:casedistinction}
\end{figure}

We next define the \myemph{type} of an object.
The type of an object is represented by the index of the edge where the object can possibly be swapped with~$x$.

\begin{definition}\label{def:types}
  Define the \emph{index} of each edge~$\{I+t-1,I+t\}$ to be $t$, $t\ge 1$.  
  For each object~$y$,
  if $y$ and $x$ can possibly be swapped at some edge~(see \cref{lem:swapPos}),
  then let the \myemph{type} of~$y$ be the index of this edge.
  Otherwise, the \myemph{type} of~$y$ is~$0$.
\end{definition}

\begin{figure}[t]
  \centering
\begin{tikzpicture}[scale=0.85, every node/.style={scale=0.9}]
\node[circle,draw, label=below:1, label=above:$a$] at (0,0) (v1) {};
\node[circle,draw, label=below:2, label=above:$b$] at (1,0) (v2) {};
\node[circle,draw, label=below:$I$, label=above:$c$] at (2,0) (v3) {};
\node[circle,draw, label=below:4, label=above:$d$] at (3,0) (v4) {};
\node[circle,draw, label=below:5, label=above:$x$] at (4,0) (v5) {};

\draw (v1) -- (v2);
\draw (v2) -- (v3);
\draw (v3) -- node[above] {1} (v4);
\draw (v4) -- node[above] {2} (v5);
\end{tikzpicture}
\begin{tabular}{llll}
  $1:b \succ$ \fbox{$a$}, & $2: c \succ a \succ$ \fbox{$b$}\,, &  $I:x \succ a \succ$ \fbox{$c$}\,,\\
  $4: a \succ c \succ x \succ$ \fbox{$d$}\,, &
  $5: d \succ$ \fbox{$x$}\,.
\end{tabular}

\caption{An example for types (\cref{def:types}). 
  Object~$b$ can never be swapped with~$x$ before $x$ reaches agent~$I$, and therefore its type is~$0$. The type of object~$d$ is~$2$ and the type of~$a$ and~$c$ is~$1$.}
\label{fig:types}
\end{figure}
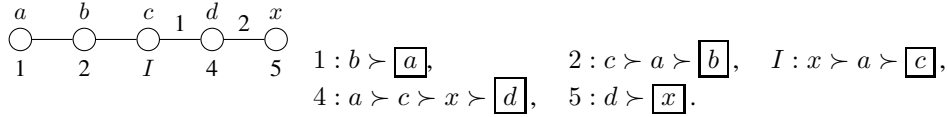

\noindent \Cref{fig:types} shows an example of types.
Note that it is not possible that the edge found by \cref{lem:swapPos} has no index as we can assume that agent~$I$ prefers object~$x$ over all other objects.
Clearly, in every sequence of swaps in which agent~$I$ gets~$x$, object~$x$ is swapped exactly once with an object of each type~$\geq 1$.

\noindent Assume that $z$ the object of type~$1$ that agent~$I$ swaps last to get object~$x$.
Observe that since the underlying graph is a path, each object that initially starts between objects~$x$ and~$z$ must be swapped with exactly one of these two objects in every sequence of swaps that ends with agent~$I$ exchanging~$z$ for~$x$.
In the algorithm we will try all objects of type~$1$ and check whether at least one of them yields a solution.
This only adds a factor of~$n$ to the running time.
Observe that object~$z$ needs to reach~$I$ from the left and hence \cref{lem:swapPos} applies for~$z$ and~$I$.
We will use this fact to show that there are at most two possible candidate objects of each type.
For this, we first define the \myemph{subtype} of an object.
Roughly speaking, the subtype of~$y$ encodes whether~$y$ is left or right of the object of the same type that shall move to the right.

\begin{definition}\label{def:subtypes}
For each object~$y$ of type~$\alpha > 1$, let~$e$ be the edge where~$y$ and~$z$ can possibly be swapped.
If $e$ does not exist, then set the type of all other objects of type~$\alpha$ to~$0$ and set the subtype of~$y$ to~$\ell$.
If $e$ exists, then let~$h$ be the number of edges between the agent~$a$ initially holding~$y$ and the edge~$e$; if~$e$ is incident to~$a$, then let~${h=0}$.
If $\alpha \leq h + 1$, then the \myemph{subtype} of $y$ is ``$r$'' (for right); otherwise the \myemph{subtype} of $y$ is ``$\ell$'' (for left).
\end{definition} 

\Cref{fig:subtypes} shows an example of subtypes.
Notice that if the edge~$e$ exists, then it is unique as stated in \cref{lem:swapPos}. 
\begin{figure}[t]
\centering
\begin{tikzpicture}[scale=0.9, every node/.style={scale=0.9}]
\node[circle,draw, label=above:$z$, label=below:$1$] at (0,0) (v1) {};
\node[circle,draw, label=above:$a$, label=below:$2$] at (1,0) (v2) {};
\node[circle,draw, label=above:$b$, label=below:$3$] at (2,0) (v3) {};
\node[circle,draw, label=above:$c$, label=below:$4$] at (3,0) (v4) {};
\node[circle,draw, label=above:$d$, label=below:$5$] at (4,0) (v5) {};
\node at (5,0) (v6) {$\cdots$};

\draw (v1) -- (v2);
\draw (v2) -- (v3);
\draw (v3) -- (v4);
\draw (v4) -- (v5);
\draw (v5) -- (v6);
\end{tikzpicture}

\begin{tabular}{l@{\;\;}l@{\;\;}l@{\;\;}l}
$1: a \succ$ \fbox{$z$}\,, &  $2: c \succ b \succ z \succ$ \fbox{$a$}\,, & $3: z \succ c \succ$ \fbox{$b$}\,,\\
 $4: d \succ z \succ b \succ$\,, \fbox{$c$}, &
 $5: z \succ$ \fbox{$d$}\,.
\end{tabular}
\caption{An example for subtypes (\cref{def:subtypes}). Preference lists only contain the objects that are depicted in the picture. Assume that objects~$a,b,c,$ and~$d$ are all of type~$2$. If~$c$ has to be swapped with~$z$ at any point, then it is required to move once to the left before being swapped with~$z$ as agent~$3$ prefers~$z$ over~$c$.
  Thus, the value~$h$ as defined in \cref{def:subtypes} is $1$.
  One can verify that $c$ has subtype~$r$.
  Agents~$a$,~$b$, and~$d$ have subtype~$\ell$.}
\label{fig:subtypes}
\end{figure}
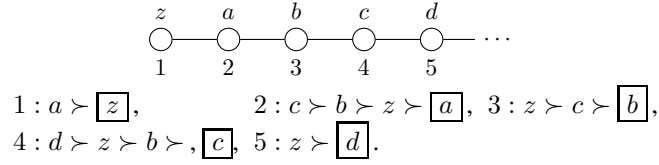

The following auxiliary result helps to identify at most two relevant objects of each type (one of each subtype).
\newcommand{\objectsmovingaround}{
  Consider an object~$y$ that is swapped with object~$z$ before object~$z$ reaches agent~$I$, let $\{i-1,i\}$ be the edge where object~$z$ and $y$ are swapped, and let~$j$ be the agent that initially holds~$y$.
  Then, for each object~$w$ it holds that
  \begin{inparaenum}[(i)]
    \item\label{prop:swapleft} if~$w$ is swapped with~$y$ before~$y$ is swapped with $z$ then~$w$ has type from $\{2,3,\ldots,j-i+1\}$, and
    \item\label{prop:smalltype}  if~$w$ has a type from $\{2,3,\ldots,j-i+1\}$ and is swapped with object~$x$, then it has to be swapped with object~$y$ before~$y$ and~$z$ can be swapped.
    \end{inparaenum}  
}
\begin{lemma}
\label{lem:interval}\objectsmovingaround
\end{lemma}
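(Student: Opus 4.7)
My plan is to analyze each of the two statements by combining \cref{lem:swapPos} (the unique edge at which two distinct objects can ever be swapped) with structural properties of valid swap sequences on a path. Throughout I rely on the fact that on the path each object moves in a single direction; in particular $z$ moves rightwards from its initial position $z_0\le i-1$ up to~$I$, $y$ moves leftwards from~$j$ down to~$i-1$, and any object~$w$ that swaps with~$y$ at edge $\{j-k,j-k+1\}$ (for some $k\in\{1,\ldots,j-i\}$) must start at some $w_0\le j-k$ and land at position $j-k+1$ immediately after that swap.

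For part~\eqref{prop:swapleft}, I apply \cref{lem:swapPos} to~$w$ and~$x$ with agent~$I$ to obtain the unique edge $\{b-1,b\}$ where they could ever be swapped; the type of~$w$ is then $t=b-I$, and I must show $t\in\{2,\ldots,j-i+1\}$. The lower bound $t\ge 2$ splits into two cases. If $w_0>I$, then Case~1 of the proof of \cref{lem:swapPos} immediately forces $b>w_0\ge I+1$. Otherwise $w_0\le I$, and I argue by contradiction that agent~$I+1$ cannot prefer~$w$ over~$x$: in any valid sequence in which~$I$ receives~$x$, agent~$I+1$ must at some point accept~$x$ in order to forward it to~$I$, and by tracking the chain of objects she holds and invoking transitivity of her preference order one forces $x\succ_{I+1}w$. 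For the upper bound $t\le j-i+1$, I exploit that $y$ performs exactly $j-i+1$ swaps in total (with $w_1,\ldots,w_{j-i}$ and finally with~$z$); the preferences of the agents in positions $\{I+1,\ldots,I+j-i+1\}$ that appear along $w$'s rightward path are constrained enough by the valid sequence that one of them must prefer~$w$ over~$x$, which pins~$b$ inside this range.

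For part~\eqref{prop:smalltype}, I assume $w$ has type $t\in\{2,\ldots,j-i+1\}$ and is actually swapped with~$x$, so~$w$ ends at position $I+t\ge I+1$. Since~$y$ travels from~$j$ to $i-1\le I-1$ and~$w$ from its starting position to $I+t>i-1$, the trajectories of~$w$ and~$y$ on the path must cross, which on a path forces a direct $wy$-swap. To rule out that this swap takes place after the $yz$-swap, I suppose the contrary: after the $yz$-swap, $z$ occupies~$i$ and keeps moving rightwards up to~$I$, while a later $wy$-swap would leave~$w$ at some $p\le i-1$; then $w$'s further rightward journey would have to overtake~$z$ near position~$I$, but since both~$w$ and~$z$ move rightwards and each object uses each edge at most once,~$w$ cannot pass~$z$ before~$z$ gives~$x$ to~$I$ in the final swap, contradicting the existence of the $wx$-swap strictly before the $zx$-swap.

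The main obstacle is the upper bound in part~\eqref{prop:swapleft}: translating the combinatorial fact that $y$ performs $j-i+1$ swaps into the algebraic bound $b\le I+j-i+1$ requires careful book-keeping of which agents' preferences are constrained by the valid sequence, and in what way. I expect the cleanest route to be to examine the first agent strictly to the right of~$I$ who accepts~$w$ at some point in the sequence and to show, via rationality of her subsequent swaps, that this agent must be at position at most $I+j-i+1$.
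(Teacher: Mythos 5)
Your proposal sets up the right framework (monotone trajectories on the path, the unique swap edge from \cref{lem:swapPos}, and the crossing principle), and your treatment of the lower bound in part~(i) and of the ``before'' clause in part~(ii) is essentially sound, if more roundabout than necessary: for the lower bound it suffices to note that $w\neq z$, that $w$ moves right and must therefore meet $x$, and that the type-$1$ edge $\{I,I+1\}$ is already consumed by the $z$--$x$ swap. However, there are two genuine gaps, and they sit exactly where the real work of the lemma lies. For the upper bound in part~(i) you only offer a plan (``examine the first agent strictly to the right of~$I$ who accepts~$w$\dots''), which you yourself flag as unresolved. The paper closes this with a global counting argument rather than a local preference argument: $y$ must drift left by exactly $j-i$ positions before meeting $z$ at $\{i-1,i\}$, so exactly $j-i$ rightward-moving objects cross~$y$; these have pairwise distinct types (each type contributes exactly one object to the rightward stream, since $x$ crosses each edge once); and if one of them had type at least $j-i+2$, then either all types $2,\dots,j-i+1$ are also represented between $y$ and $z$ --- giving at least $j-i+1$ rightward movers and pushing $y$ past position~$i$ --- or some type $\alpha\le j-i+1$ is missing there, so that $x$ can never be traded over edge $\{I+\alpha-1,I+\alpha\}$. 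It is not apparent that rationality of a single agent's successive swaps can substitute for this count.

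In part~(ii) the gap is unacknowledged: your crossing argument only applies when $w$ starts to the \emph{left} of~$y$. If $w$ starts to the right of~$y$, then $w$ moves right and $y$ moves left, the trajectories diverge, they never cross, and no swap is forced --- and ruling out this configuration is precisely the content of the statement. The paper excludes it by the same counting machinery: rightward movers are swapped with $x$ in decreasing order of type and hence sit on the path ordered by type, so if $w$ (of type $\alpha\le j-i+1$) started right of~$y$, at most $\alpha-2\le j-i-1$ rightward movers would lie between $y$ and $z$; then $y$ could reach only position $i+1$, and by the uniqueness in \cref{lem:swapPos} the $y$--$z$ swap at $\{i-1,i\}$ would be impossible. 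Without this argument, part~(ii) remains unproved in the critical case.
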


\begin{proof}
	Let~$y$ be an object that is initially held by some agent~$j$ and that is swapped with~$z$ before~$x$ and~$z$ are swapped.
	Then,~$y$ has to be moved to the left and we can use \cref{lem:swapPos} to compute a unique edge~$\{i-1,i\}$ at which~$y$ and~$z$ are swapped.
	Throughout this proof we will implicitly use the fact that the relative order of all objects that move to the right can never change as otherwise an agent would regain an object she already gave away before and hence she would have made an irrational trade.
 
	Statement~\eqref{prop:swapleft}:
	Let~$w$ be an object that is swapped with~$y$ before~$y$ is swapped with $z$.
	Then~$w$ has to be moved to the right.
	Suppose towards a contradiction that~$w$ has type at least~$j-i+2$; note that no object can have type~$1$ as object~$z$ will be the type-$1$ object.
	Then, either there are at least~$j-i$ other objects (of types~$2,3,\ldots j-i+1$) that initially start between~$y$ and~$z$ and that are moved to the right or there is a type~$\alpha \in [2,j-i+1]$ such that no object of type~$\alpha$ is initially between~$y$ and~$z$ and that is moved to the right.
	In the former case, there are at least~$j-i+1$ objects that are initially between~$y$ and~$z$ and that move to the right and hence~$y$ is moved to agent~$i-1$ before it can be swapped with~$z$, a contradiction.
	In the latter case, note that after objects~$w$ and~$x$ are swapped, there is some~$\alpha \in [2,j-i+1]$ such that no object of type~$\alpha$ is between~$x$ and~$z$ and that is moved to the right and hence~$x$ can not be swapped over the edge~$\{I+\alpha-1,I+\alpha\}$, again a contradiction.
	Thus,~$w$ has a type at most~$j-i+1$.

	Statement~\eqref{prop:smalltype}:
	Let~$w$ have a type~$\alpha \in [2,j-i+1]$ and let~$w$ be moved to the right.
	Suppose towards a contradiction that~$w$ and~$y$ are not swapped.
	Since~$y$ is moved to the left and~$w$ is moved to the right,~$w$ has to start to the right of~$y$.
	Since the relative order of objects moving to the right cannot change and since object~$x$ is swapped with all objects moving to the right in decreasing order of their types, it follows that all objects that move to the right are initially ordered by their type.
	Hence, there are only objects of type~$\beta \in [2,\alpha-1]$ that initially start between~$w$ and~$z$ and that move to the right.
	Since~$y$ is initially left of~$w$, it holds that there are at most~$j-i-1$ objects that are initially between~$y$ and~$z$ and that are moved to the right.
	Thus,~$y$ can only reach agent~$i+1$ before it has to be swapped with~$z$.
	Since the edge computed by \cref{lem:swapPos} is unique,~$y$ and~$z$ cannot be swapped over~$\{i,i+1\}$, a contradiction.
	Thus,~$w$ and~$y$ have to be swapped before~$y$ and~$z$ can be swapped.
      \end{proof}

\noindent Subtypes help to exclude all but two objects of each type.

\begin{lemma}
\label{lem:type}
Given objects~$x$ and~$z$, there is an~$O(n^2)$-time preprocessing that excludes all but at most two objects of each type~$\alpha \geq 2$ as potential candidates for being swapped with~$x$.
\end{lemma}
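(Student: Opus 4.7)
The plan is to first preprocess every object to determine its type and subtype, and then within each type use a direction-of-motion argument to narrow the field down to at most one candidate per subtype.

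For the preprocessing, I would iterate over all objects $y$ and invoke the linear-time procedure from \cref{lem:swapPos} twice. The first invocation, on the pair $(y, x)$, either identifies the unique edge $\{I+\alpha-1, I+\alpha\}$ at which $y$ could ever be swapped with $x$ (and hence the type $\alpha$ of $y$) or reports that no such edge exists, in which case the type is $0$. The second invocation, on the pair $(y, z)$, identifies the edge $e$ at which $y$ and $z$ could be swapped, as required in \Cref{def:subtypes}. From $e$ and the initial holder $a$ of $y$, I then compute the distance $h$ and derive $y$'s subtype in constant additional time. This gives an $O(n)$ computation per object, so the preprocessing runs in $O(n^2)$ overall.

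For the filtering step I fix $\alpha \geq 2$ and argue separately for each subtype. Suppose $y_1$ and $y_2$ are two distinct type-$\alpha$ objects of the same subtype, and, without loss of generality, assume $y_1$ is initially to the left of $y_2$. Only one of them can actually be the object that is swapped with $x$ at edge $\{I+\alpha-1, I+\alpha\}$ in a valid swap sequence. Assuming $y_i$ is the chosen one, I invoke \cref{lem:interval} (using $z$ as the type-$1$ object) to pin down the objects that must be swapped against $y_i$ on its way to and from the meeting edge with $z$, namely those of types $2, \ldots, a_{y_i} - i + 1$. I then combine this with the rigidity that the relative order of objects moving to the right cannot change, and likewise for objects moving to the left: any attempt to use both $y_1$ and $y_2$ forces a conflict, because one of them must occupy agent $I+\alpha-1$ at the moment $x$ arrives at $I+\alpha$, and the other, being blocked by the first in the common direction of motion, cannot clear its way in time. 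Hence at most one type-$\alpha$ object of each subtype survives as a candidate, yielding the desired bound of two per type.

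The main obstacle is the case analysis in the filtering step: one must verify that the subtype definition captures exactly the dichotomy that arises from whether the object moves left and then right (crossing $e$) or stays right throughout before joining $x$, and that within each of the two resulting classes the positional ordering is sharp enough to eliminate every candidate except the designated one. The remainder of the argument reduces to bookkeeping built on the order-preservation property enforced by rational trades together with the uniqueness statements in \cref{lem:swapPos} and \cref{lem:interval}.
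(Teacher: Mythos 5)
Your preprocessing and running-time analysis match the paper's: compute the subtype of every type-$\alpha$ object by finding (via \cref{lem:swapPos}) the unique edge where it could meet~$z$, counting the number~$h$ of left-moves needed to reach that edge, and comparing $\alpha$ with $h+1$; this is $O(n)$ per object and $O(n^2)$ overall. The gap is in the filtering step, which is the actual content of the lemma. What you must show is not that ``only one of $y_1,y_2$ can be the object swapped with~$x$'' (that is trivially true for any two objects of the same type), but that all but at most two objects of type~$\alpha$ can be \emph{excluded in advance}, i.e., shown never to be the right-moving candidate in any valid sequence. The paper derives this from a positional consequence of \cref{lem:interval}: a type-$\alpha$ object of subtype~$r$ that moves left needs $h$ left-swaps with right-movers of types $2,\dots,h+1$, a range that contains $\alpha$, so it must be swapped with (and hence start to the right of) whichever type-$\alpha$ object moves right; a subtype-$\ell$ object that moves left is never swapped with the type-$\alpha$ right-mover and therefore must start to its left. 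Hence the chosen object must have every other $\ell$-object on its left and every other $r$-object on its right, which leaves only the rightmost $\ell$-object and the leftmost $r$-object as candidates (and, if the objects are not ordered $\ell,\dots,\ell,r,\dots,r$, no candidate at all, so there is no solution for this choice of~$z$).

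Your proposal never establishes this positional dichotomy. The substitute argument --- that one of $y_1,y_2$ ``must occupy agent $I+\alpha-1$ at the moment $x$ arrives at $I+\alpha$, and the other, being blocked by the first in the common direction of motion, cannot clear its way in time'' --- does not follow from anything you set up: the non-chosen object does not travel toward $I+\alpha-1$ at all, it travels left to meet~$z$, and whether it can get there is governed by the count of right-movers between it and~$z$, which is precisely what the subtype encodes. You then explicitly defer ``verifying that the subtype definition captures exactly the dichotomy'' as the remaining obstacle; that verification \emph{is} the proof of the lemma, so as written the core of the argument is missing.
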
 

\begin{proof}
Consider a type~$\alpha \geq 2$ and all objects of type~$\alpha$.
Compute the subtype of each of these objects.
Exactly one of them is swapped to the right and all others have to be swapped with~$z$ at some point.
From \cref{lem:swapPos}, we know that all the objects that are swapped to the left have a specific edge where they can possibly be swapped with~$z$.
If such an edge does not exist for some object, then we know that this object has to move to the right and we can change the type of all other objects of type~$\alpha$ to~$0$.
Note that there is no solution if such an edge does not exist for multiple objects of the same type.

If such an edge exists for each object of type~$\alpha$, then we count the number~$h$ of swaps to the left that are needed for each object~$y$ of type~$\alpha$ to reach the edge at which it can be swapped with~$z$.
Note that by \cref{lem:interval} each of these swaps happens with an object 
of type~$\beta \in [2,h+1]$.
If these types include $\alpha$, that is, $\alpha \leq h+1$, then~$y$ has subtype~$r$ and otherwise it has subtype~$\ell$.
If the subtype of~$y$ is~$r$ and if~$y$ is moved to the left, then by \cref{lem:interval}~$y$ has to be right of the object of type~$\alpha$ that moves to the right. 
Again by \cref{lem:interval}, if~$y$ is moved to the left and has subtype~$\ell$, then~$y$ has to be left of the object of type~$\alpha$ that is moved to the right. 

Consider the case where the objects of a given type are not ordered by their subtype (from left to right:~$\ell,\ell,\ldots,\ell,r,r,\ldots,r$).
Then, for each object~$w$ of type~$\alpha$, there exists an object of type~$\alpha$ and subtype~$r$ to the left or an object of type~$\alpha$ and subtype~$\ell$ to the right.
In \cref{fig:subtypes}, for~$w = d$ there is~$c$ with subtype~$r$ left of~$w$ and for~$w \in \{a,b,c\}$ there is object~$d$ of subtype~$\ell$ right of~$w$.
Thus, if we try to send~$w$ to the right, then the number of swaps to the left of some other object of type~$\alpha$ (objects~$c$ or~$d$ in \cref{fig:subtypes}) does not match the number of swaps needed to reach the edge where the objects can be swapped with~$z$.
Hence, there is no solution.

Now consider the case where the objects are ordered by subtype as indicated above.
By the same argument as above there are only two possible objects of type~$\alpha$ that can possibly travel to the right: The last object of subtype~$\ell$ and the first object of subtype~$r$.
We can therefore set the type of all other objects of type~$\alpha$ to~$0$.

Let~$n_\alpha$ be the number of objects of type~$\alpha$.
Since the subtype for each object of type~$\alpha$ can be computed in~$O(n)$ time, 
we obtain that the described preprocessing takes~$O(n_\alpha \cdot n)$ time for type~$\alpha$.
After having computed the subtype of each object of type~$\alpha$, we iterate over all these objects and find the two specified objects or
determine that the objects are not ordered by subtype in~$O(n)$ time.
Hence the overall running time is in~$O(\sum_{\alpha>1} (n_\alpha \cdot n)) \subseteq O(n^2)$.
The inclusion holds since each object (except for~$x$) has exactly one type.
\end{proof}

We are now in a position to present the heart of our proof.
We will show how to choose an object of each type~$\alpha \geq 1$ such that moving those objects to the right and all other objects to the left leads to a swap sequence such that agent~$I$ gets object~$x$ in the end.
Once we have chosen the correct objects, we can compute the final position of each object in linear time and then use the fact then any swap sequence that only sends objects ``in the correct'' direction is a valid sequence since the relative order of all objects that travel to the left (respectively to the right) can never change.
Such a selection leads to a solution if and only if for each pair of objects such that the right one moves to the left and the left one moves two the right, the two endpoints of the edge where they are to be swapped can agree on this swap.
We mention that these insights were also used in Algorithm~1 by \citet{GouLesWil2017}.

We will next focus on objects of type~$0$.
Using \cref{lem:swapPos}, we can compute for each object~$y$ of type~$0$ the edge where~$y$ and~$z$ can be swapped.
If such an edge does not exist, then there is no solution.
Hence, we can again compute the number~$h$ of objects between~$y$ and~$z$ that need to move to the right.
If any object~$w$ which is to the right of~$y$ has a type~$\beta \leq h+1$ or any object~$w'$ to the left of~$y$ has a type~$\beta' > h + 1$, then by \cref{lem:interval} these object~$w$ and~$w'$ cannot be moved to the right and hence we can set its type to~$0$.
Suppose, towards a contradiction, that there is an object~$y$ of type~$0$ between two objects~$v,w$, both of type~$\alpha$.
Then, we can either set the type of~$w$ (if~$\alpha \leq h+1$) or of~$v$ (if~$\alpha > h+1$) to~$0$.
Hence, there is no object of type~$0$ between two objects of the same type and we are guaranteed that~$z$ can be swapped with all objects of type~$0$ regardless of which objects of each type are moved to the right.

Hence, it remains to study swaps (i) of~$z$ with objects of type at least~$2$ that move to the left, (ii) of objects that move to the right and objects of type at least~$2$ that move to the left, and (iii) of objects of type~$0$ and objects moving to the right.
Before doing so, we need to define the last ingredient for our proof: \myemph{blocks}.

\begin{definition}\label{def:blocks}
  A \myemph{block} is a minimal subset~$B\subseteq X$ of objects that contains all objects of all types in some interval~$[\alpha,\beta]$ with~$2 \le \alpha \le \beta$ such that all objects in~$B$ are initially hold by agents on a subpath of the input path (see \cref{fig:blocks}).
\end{definition}

\begin{figure}
\centering
\begin{tikzpicture}[scale=0.75, every node/.style={scale=1}]
\node at (-1,0) (v0) {$\cdots$};
\node[circle,draw, label=above:$z$] at (0,0) (v1) {};
\node[circle,draw, label=above:$2_\ell$] at (1,0) (v2) {};
\node[circle,draw, label=above:$3_\ell$] at (2,0) (v3) {};
\node[circle,draw, label=above:$2_r$] at (3,0) (v4) {};
\node[circle,draw, label=above:$4_\ell$] at (4,0) (v5) {};
\node[circle,draw, label=above:$3_r$] at (5,0) (v6) {};
\node[circle,draw, label=above:$4_r$] at (6,0) (v7) {};
\node[circle,draw, label=above:$0$] at (7,0) (v8) {};
\node[circle,draw, label=above:$5_\ell$] at (8,0) (v9) {};
\node[circle,draw, label=above:$5_r$] at (9,0) (v10) {};
\node at (10,0) (v11) {$\cdots$};

\draw (v0) -- (v1);
\draw (v1) -- (v2);
\draw (v2) -- (v3);
\draw (v3) -- (v4);
\draw (v4) -- (v5);
\draw (v5) -- (v6);
\draw (v6) -- (v7);
\draw (v7) -- (v8);
\draw (v8) -- (v9);
\draw (v9) -- (v10);
\draw (v10) -- (v11);

\draw (0.5,1) -- (6.5,1) -- (6.5,-0.5) -- (0.5,-0.5) -- cycle;
\draw (7.5,1) -- (9.5,1) -- (9.5,-0.5) -- (7.5,-0.5) -- cycle;
\end{tikzpicture}
\caption{An example of blocks~(\cref{def:blocks}). The left block contains all objects of type~$2,3$ and~$4$ and the right block contains the two objects of type~$5$.}
\label{fig:blocks}
\end{figure}
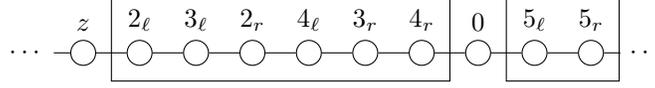


We first prove that blocks are well defined and that each object of type~$\eta \geq 2$ is contained in exactly one block.

\begin{lemma}
\label{lem:block}
Each object of type~$\eta \geq 2$ is contained in exactly one block, if there is no object of a higher type to its left, then this object has the highest type in its block, and all blocks can be computed in linear time.
\end{lemma}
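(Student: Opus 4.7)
The plan is to establish the three parts of the lemma in sequence: each object of type at least two is contained in exactly one block, the conditional claim about attaining the highest type in the block, and the linear-time computation. For existence and uniqueness I will exhibit, for each object $y$ of type $\eta \ge 2$, a canonical block $B(y)$ produced by a monotone closure process; the minimality clause in \cref{def:blocks} will then force any block containing $y$ to equal $B(y)$.

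Concretely, the closure starts with $B = \{y\}$ and type interval $T = [\eta,\eta]$, and repeats two rules until a fixed point is reached: (a) if some object whose type lies in $T$ is missing from $B$, add it; (b) if the subpath between the leftmost and rightmost positions of $B$ contains an agent initially holding an object of type $\gamma \ge 2$ with $\gamma \notin T$, widen $T$ to include $\gamma$. Both $T$ and $B$ are monotonically non-decreasing and bounded, so the process terminates. The output satisfies the block conditions by construction, and since every block $B'$ containing $y$ must be closed under rules~(a) and (b), we obtain $B(y) \subseteq B'$; the minimality in \cref{def:blocks} then yields $B(y) = B'$, giving uniqueness.

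For the conditional claim I will argue by induction on the steps of the closure that, under the hypothesis that no object of type exceeding $\eta$ lies to the left of $y$, the upper endpoint of $T$ never grows beyond $\eta$. Any expansion via rule~(b) above $\eta$ would require a witnessing object $y'$ of type $\gamma > \eta$ currently in the covered subpath, and such $y'$ must by hypothesis lie strictly to the right of $y$; I will then trace back the applications of rule~(a) that could have stretched the subpath far enough right to include $y'$ and show that each such trace would force the presence of an object of type exceeding $\eta$ to the left of $y$ at an earlier stage, contradicting the hypothesis. For the linear-time claim, I will precompute in linear time the leftmost and rightmost positions at which each type occurs, and then sweep agents from left to right while maintaining the current block's interval $[\alpha,\beta]$ together with a ``forward reach'' $R$ equal to the maximum rightmost position of any type in $[\alpha,\beta]$: the next agent at position $p$ joins the current block iff $p \le R$, in which case $[\alpha,\beta]$ and $R$ are updated; otherwise the current block is finalised and a new one is started. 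Each type and each position is touched a constant number of times, so the sweep runs in linear time. The main obstacle I anticipate is the conditional claim: rigorously formalising how the geometric growth of the covered subpath in rule~(a) interacts with the hypothesis about positions of higher-type objects, so as to justify that rule~(b) can never pull a type strictly above $\eta$ into $T$.
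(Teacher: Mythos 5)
There is a genuine gap in your treatment of the conditional claim, and it is fatal to the closure-based plan as described. Your target invariant --- that under the hypothesis ``no object of type exceeding $\eta$ lies to the left of $y$'' the upper endpoint of $T$ never grows past $\eta$ --- is false for a general object $y$ of type $\eta$. Take $y=3_\ell$ in \cref{fig:blocks}: no object of type greater than $3$ lies to its left, yet a single application of rule~(a) adds $3_r$, the covered subpath then contains $4_\ell$, which sits \emph{between} $3_\ell$ and $3_r$ and hence strictly to the \emph{right} of $y$, and rule~(b) widens $T$ to include type~$4$. Your trace-back argument cannot rescue this: the higher-type witness enters the subpath in the very first step, from the right, and no earlier stage involves anything to the left of $y$. (Read literally, the lemma's conditional clause already fails for $3_\ell$ in the paper's own example; the object for which it is true, and for which it is later used, is $\eta_r$.) Even restricted to $y=\eta_r$, your induction needs the structural fact that the covered subpath never extends to the right of $\eta_r$, i.e., that $\gamma_r$ lies left of $\eta_r$ for every type $\gamma$ pulled into $T$. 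That ordering --- objects of subtype $r$ appear in increasing type order, $(\gamma+1)_\ell$ is the right neighbour of $\gamma_\ell$, and so on --- is exactly what the paper derives from the two-objects-per-type preprocessing of \cref{lem:type} together with the ``if an object could never be chosen, set its type to $0$'' arguments; your proposal never invokes any of this. The same omission weakens uniqueness: you assert that every block containing $y$ is closed under rules (a) and (b), but \cref{def:blocks} does not by itself guarantee that a block's subpath contains no objects of types outside its interval, so that closure property has to be argued from the same structural facts.

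A secondary concern: what \cref{lem:f2l} actually cites \cref{lem:block} for is the interleaving structure inside a block ($(\gamma+1)_\ell$ lies left of $\gamma_r$ unless $\gamma=\beta$), which the paper's left-to-right construction makes explicit. An abstract fixed-point argument that only certifies the three literal claims would not expose this structure and so would not support the downstream lemmas. Your linear-time sweep via precomputed leftmost/rightmost positions per type is fine in spirit, but showing that it computes the same partition as the closure again rests on the unproved ordering facts.
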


\begin{proof}
For any type~$\delta$, let~$\delta_\ell$ be the object of type~$\delta$ and subtype~$\ell$ and analogously let~$\delta_r$ be the object of type~$\delta$ and subtype~$r$.
Recall that~$\delta_\ell$ is left of~$\delta_r$.
If there is only one object~$w$ of type~$\delta$, then we say that~$\delta_\ell = \delta_r = w$.
Observe that the objects in a block are initially hold by agents of a subpath of the input path and that the subpaths of different blocks do not intersect.
The ``first'' or ``leftmost'' block with respect to this subpath has to start with~$2_\ell$ as otherwise the leftmost object of the block (again with respect to its subpath) could never be chosen and hence its type could be set to~$0$.
Analogously, the objects of a block with interval~$[\alpha,\beta]$ have to start with~$\alpha_\ell$ and end with~$\beta_r$, as otherwise the leftmost (respectively rightmost) object could never be chosen since there is no object of type~$\alpha$ (of type~$\beta$) to its left (respectively to its right) and its type is larger than~$\alpha$ (less than~$\beta$).

We will start with the block that contains type~$2$ in its interval and show that there is a unique type~$\gamma$ such that the block has interval~$[2,\gamma]$.
Applying the same argumentation iteratively with~$\gamma + 1$, we show that each object of type~$\eta > 2$ is contained in exactly one block:
Consider the smallest type~$\alpha$ that is not yet shown to be in a block and consider the objects~$\alpha_\ell$ and~$\alpha_r$.
If these two objects are the same or are initially hold by adjacent agents, then these object(s) are a block with interval~$[\alpha,\alpha]$ and since blocks are minimal sets of objects, they are not part of any other block.
If initially there is an object between~$\alpha_\ell$ and~$\alpha_r$, then it cannot be objects of type~$\eta < \alpha$.
This holds, since it cannot be an object of type~$0$ as shown above, it cannot be~$z$ as~$z$ is the leftmost object that we consider and by assumption it cannot be an object of type~$\eta \in [2,\alpha-1]$.
The right neighbor of~$\alpha_\ell$ therefore has to be~$(\alpha+1)_\ell$ since if it was of type~$\eta > \alpha+1$, then it could never be chosen as there is no object of type~$\alpha +1$ to its left and hence its type could be set to~$0$.
Analogously, note that~$(\alpha+1)_r$ has to initially be right of object~$\alpha_r$ as otherwise object~$\alpha_r$ could never be chosen.
Hence we can continue with the objects~$(\alpha+1)_\ell$ and~$(\alpha+1)_r$.
Again, there can be no objects of type~$\eta < \alpha$ between them.
If there are only objects of type~$\alpha$ between them, then the block has interval~$[\alpha,\alpha+1]$ and otherwise we can continue with objects~$(\alpha+2)_\ell$ and~$(\alpha+2)_r$ and so on.
Note that this chain stops exactly at the first type~$\eta$ where all objects of a higher type than~$\eta$ are initially right of~$\eta_r$ and this happens at latest at~$\delta_r$, where~$\delta$ describes the largest type.
Since we need only a constant amount of computation for each object, all blocks can be computed in linear time.
\end{proof}

We next prove that for each block there are only two possibilities to choose objects of each type in the block that can lead to a solution.
We start with an intermediate lemma.

\newcommand{\ftwol}{%
If in a block with interval~$[\alpha,\beta]$, we decide for some type~$\gamma \in [\alpha,\beta]$ to send object~$\gamma_r$ to the  right, then we need to send all objects~$\delta_r$ of type~$\delta \in [\gamma,\beta]$ to the right. %
}
\begin{lemma}[\appsymb]
\label{lem:f2l}
\ftwol
\end{lemma}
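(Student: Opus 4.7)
The plan is to derive a contradiction using the relative-order invariant of right-moving objects (used in the proof of~\cref{lem:interval}): if two objects both travel to the right throughout a valid sequence of swaps, they can never overtake one another.

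Suppose, for contradiction, that~$\gamma_r$ is chosen as the right-mover of type~$\gamma$ yet~$\delta_\ell$ is chosen as the right-mover of some type~$\delta\in(\gamma,\beta]$, and fix the \emph{minimal} such~$\delta$. By this minimality, for each $\mu\in[\gamma+1,\delta-1]$ the right-mover of type~$\mu$ is~$\mu_r$. Since the right-mover of type~$\mu$ finishes at agent~$I+\mu$, the objects $\gamma_r,(\gamma+1)_r,\ldots,(\delta-1)_r,\delta_\ell$ end up at $I+\gamma,I+\gamma+1,\ldots,I+\delta-1,I+\delta$ in this left-to-right order. The invariant then forces them to appear in the same left-to-right order \emph{initially}; in particular, $(\delta-1)_r$ starts strictly to the left of~$\delta_\ell$.

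The contradiction comes from a structural property of blocks implicit in~\cref{lem:block}: for every~$\mu\in[\alpha+1,\beta]$, the object~$\mu_\ell$ initially lies strictly between~$(\mu-1)_\ell$ and~$(\mu-1)_r$, and hence strictly to the left of~$(\mu-1)_r$. The base case~$\mu=\alpha+1$ is the explicit observation from the proof of~\cref{lem:block} that~$(\alpha+1)_\ell$ must be the right neighbor of~$\alpha_\ell$ inside the block. For the inductive step, note that if~$\mu_\ell$ lay to the right of~$(\mu-1)_r$, then by the type-ordering of both subtype-$\ell$ and subtype-$r$ objects (established in~\cref{lem:type}), every type-$\nu$ object with $\nu\ge\mu$ would lie strictly right of~$(\mu-1)_r$, while every type-$\nu$ object with $\alpha\le\nu\le\mu-1$ would lie strictly left of~$\mu_\ell$. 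Thus the objects on the subpath from~$\mu_\ell$ to~$\beta_r$ would form a proper subset of our block that still satisfies the block conditions with the interval~$[\mu,\beta]$, contradicting the minimality built into the definition of a block. Applying this property with~$\mu=\delta$ yields~$\delta_\ell<(\delta-1)_r$, contradicting the order derived above.

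The main obstacle is formalizing the structural claim above --- pinning down~$\mu_\ell<(\mu-1)_r$ inside a block from the somewhat informal recursive description in the proof of~\cref{lem:block}. Once that property is in hand, the rest is a direct application of the right-mover order invariant.
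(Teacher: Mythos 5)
Your proof is correct and follows essentially the same route as the paper's: both hinge on the block-structure fact from \cref{lem:block} that $(\mu)_\ell$ lies to the left of $(\mu-1)_r$, combined with the invariant that right-moving objects cannot change their relative order. You package the induction as a minimal-counterexample contradiction where the paper iterates forward directly, but the content is identical.
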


\begin{proof}
Let~$B$ be a block with interval~$[\alpha,\beta]$ and let~$\gamma\in[\alpha,\beta]$ be some type and assume that~$\gamma_r$ is to be sent to the right.
By \cref{lem:block}, we know that unless~$\gamma = \beta$, it holds that~$(\gamma+1)_\ell$ is to the left of~$\gamma_r$ and can therefore not be sent to the right.
Thus, we also have to move~$(\gamma+1)_r$ to the right.
This argument applies iteratively for all types in~$[\gamma,\beta]$.
\end{proof}

Based on \Cref{lem:f2l}, we prove the following.

\newcommand{\twopos}{%
There are at most two selections of objects in a block with interval~$[\alpha,\beta]$ that can lead to~$I$ getting~$x$. These selections can be computed in~$O(n\!\cdot\! (\beta-\alpha+1))$~time.%
}
\begin{lemma}
  \label{lem:2pos}
  \twopos
\end{lemma}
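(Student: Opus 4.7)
The plan is to combine Lemma~\ref{lem:f2l} with its symmetric dual and then use the initial ordering within a block to eliminate all but two of the monotonicity-consistent selections. First I would state and prove the dual of Lemma~\ref{lem:f2l}: within a block of interval $[\alpha,\beta]$, if $\gamma_\ell$ is selected to move right for some $\gamma \in [\alpha,\beta]$, then every $\delta_\ell$ with $\delta \in [\alpha,\gamma]$ must also move right. The argument mirrors Lemma~\ref{lem:f2l}: Lemma~\ref{lem:block} places $(\gamma-1)_r$ initially to the right of $\gamma_\ell$, and since right-movers cannot change their relative order but $(\gamma-1)_r$ must reach edge $\gamma-1$ (strictly left of $\gamma_\ell$'s target edge $\gamma$), the object $(\gamma-1)_r$ cannot also travel right; hence it goes left, forcing $(\gamma-1)_\ell$ to be the right-mover of type $\gamma-1$, and iterating downwards yields the claim.

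A valid selection picks exactly one of $\{\gamma_\ell,\gamma_r\}$ as the right-mover for each type $\gamma \in [\alpha,\beta]$. Lemma~\ref{lem:f2l} makes the set of types sending their $r$-object right upward-closed in $[\alpha,\beta]$, whereas the dual makes the set sending their $\ell$-object right downward-closed. Together this reduces the family of candidate selections to those determined by a single threshold $\gamma^* \in \{\alpha-1,\alpha,\ldots,\beta\}$: types $\delta \le \gamma^*$ send $\delta_\ell$ right and types $\delta > \gamma^*$ send $\delta_r$ right.

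The main obstacle is cutting the $\beta-\alpha+2$ threshold candidates down to just two, namely the extremes $\gamma^* = \alpha-1$ (all $r$-objects go right) and $\gamma^* = \beta$ (all $\ell$-objects go right). For any intermediate threshold $\gamma^* \in \{\alpha,\ldots,\beta-1\}$, both $(\gamma^*)_\ell$ and $(\gamma^*+1)_r$ must travel right while $(\gamma^*)_r$ and $(\gamma^*+1)_\ell$ must travel left; I would unfold the block structure from Lemma~\ref{lem:block} to pin down the initial positions of these four objects inside the block's subpath and then argue that at least one of the forced pairwise encounters between a right-mover and a left-mover cannot take place---either because the unique swap edge produced by Lemma~\ref{lem:swapPos} is blocked by another right-mover that must cross it first, or because the rational-trade condition fails at that edge given the subtype-implied preferences. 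Assuming this case analysis goes through, the two surviving candidate selections can be written down in $O(\beta-\alpha+1)$ time, and confirming each via Lemma~\ref{lem:swapPos} costs $O(n)$ per object in the block, giving the claimed total of $O(n\cdot(\beta-\alpha+1))$.
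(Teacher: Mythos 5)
Your threshold reformulation is sound as far as it goes: the ``dual'' you state is just the contrapositive of \cref{lem:f2l} (since for each type exactly one of $\gamma_\ell,\gamma_r$ moves right), so the admissible selections are indeed exactly the threshold selections. The gap is in your final step, and it is not merely an unfinished case analysis: the claim you are trying to establish there --- that only the two extreme thresholds $\gamma^\ast=\alpha-1$ (all subtype-$r$ objects right) and $\gamma^\ast=\beta$ (all subtype-$\ell$ objects right) can survive --- is false. The paper's second selection, the one in which $\alpha_\ell$ moves right, is in general an \emph{intermediate} threshold: one computes via \cref{lem:swapPos} the unique edge at which $\alpha_\ell$ and $\alpha_r$ must be swapped; this fixes the number~$h$ of objects between them that move right, which by \cref{lem:interval} must be the subtype-$\ell$ objects of types $\alpha+1,\ldots,\alpha+h$; and if some type between $\alpha_\ell$ and $\alpha_r$ is not among these, its subtype-$r$ object (and, by \cref{lem:f2l}, every subtype-$r$ object of higher type) must move right, yielding a mixed selection $\alpha_\ell,\ldots,(\alpha+h)_\ell,(\alpha+h+1)_r,\ldots,\beta_r$. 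Concretely, take a block with interval $[2,3]$ ordered $2_\ell,3_\ell,2_r,3_r$ in which the unique swap edge for $2_\ell$ and $2_r$ is incident to the agent initially holding $2_r$: then $3_\ell$ cannot move right (that would force $2_r$ at least one step further left before meeting $2_\ell$), so the only candidate containing $2_\ell$ is $\{2_\ell,3_r\}$ --- exactly an intermediate threshold your argument would discard.

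What is needed instead is the observation that the binary choice at type $\alpha$ \emph{determines} the entire selection: choosing $\alpha_r$ forces all subtype-$r$ objects right by \cref{lem:f2l}, while choosing $\alpha_\ell$ pins down a single threshold through the $h$-computation above (recursing on the highest type occurring between $\alpha_\ell$ and $\alpha_r$ when $h$ does not account for all of them). That determinism is where the bound of two comes from, and it also yields the stated $O(n\cdot(\beta-\alpha+1))$ running time, since the cascade invokes \cref{lem:swapPos} a constant number of times per type in the block.
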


\begin{proof}
Let~$B$ be a block with interval~$[\alpha,\beta]$.
There are only two possibilities for type~$\alpha$: Either~$\alpha_\ell$ is moved to the right or~$\alpha_r$ is moved to the right.
If~$\alpha_r$ is moved to the right, then \cref{lem:f2l} states that we need to move~$\delta_r$ to the right for all~$\delta \in [\alpha,\beta]$.
If we want to move~$\alpha_\ell$ to the right, then we know the final destination of~$\alpha_\ell$ and that~$\alpha_\ell$ and~$\alpha_r$ have to swap at some point.
We can therefore use \cref{lem:swapPos} to compute the edge where~$\alpha_\ell$ and~$\alpha_r$ are swapped.
From this we can compute the number~$h$ of objects between~$\alpha_\ell$ and~$\alpha_r$ that have to be moved to the right and hence we know that they have to be of types~$\alpha+1,\alpha+2,\ldots,\alpha+h$.
Note that no object of subtype~$r$ between~$\alpha_\ell$ and~$\alpha_r$ can be moved to the right.

Thus, there are two possibilities: The number~$h$ equals the number of objects of subtype~$r$ between~$\alpha_\ell$ and~$\alpha_r$ or not.
In the former case we know that~$h$ objects of subtype~$\ell$ have to be moved to the right and then an object of subtype~$r$ (the respective object of subtype~$\ell$ is to the left of~$\alpha_r$).
By \cref{lem:f2l}, all remaining objects of subtype~$r$ have to be moved to the right.
In the latter case we can consider the highest type~$\eta$ that occurs between~$\alpha_\ell$ and~$\alpha_r$.
If~$\eta = \beta$, then we have to only move objects of subtype~$\ell$ to the right and otherwise we can next look at the objects between~$\eta_\ell$ and~$\eta_r$ and apply the very same argument as before.
Thus, there is only the choice at the very beginning whether to move~$\alpha_\ell$ or~$\alpha_r$ to the right.
One possibility is to only move objects of subtype~$r$ to the right and the other can be computed in~$O(n \cdot  (\beta-\alpha+1))$ time as there are~$O(\beta-\alpha+1)$ types in~$B$ and for each type we can compute the possible swap position in~$O(n)$ time.
All other computations can be done in constant time per object.
\end{proof}

For both possible selections (see \Cref{lem:2pos}), we can determine in~$O((\beta-\alpha+1)^2 \cdot n)$ time, whether this selection is \myemph{consistent}, i.e., any pair of objects in this block that needs to be swapped at some point can in fact be swapped, by computing for each pair of objects where they should be swapped in~$O(n)$ time using \cref{lem:swapPos} (observe that we know the final destination of the object moving to the right) and checking whether the two endpoints of this edge can agree on this.
We further require that this selection is not in conflict with objects of type~$0$ in the sense that all of the objects that we move to the right can be swapped with all objects of type~$0$.
Observe that objects of type~$0$ are initially not located between objects of the same type and therefore we know exactly where the objects of the selection and the objects of type~$0$ are swapped independent of the selection for other blocks.

By the definition of types, we know that~$x$ can always be swapped with the objects we moved to the right.
Consider a possible selection of objects from some block~$B$ to move to the right.
All other objects are moved to the left and hence have to be swapped with~$z$ at some edge.
Since we know the number of objects to the left of~$B$ that are moved to the right (we do not know which objects these are but we know their number and types by the definition of blocks), we can compute for each of them the edge where they need to swap with~$z$.
If the swap of the considered object and~$z$ is not rational for the two endpoints of this edge, then the selection can never lead to a situation where~$I$ swaps~$z$ for~$x$ and we can therefore ignore this selection.

Thus, it only remains to find a selection for each block such that the objects that are moved to the right and the objects that are moved to the left can be swapped (if the former one is initially to the left of the latter one).
We say that these selections are \myemph{compatible}.

\begin{definition}\label{def:compatible}
Let~$B$ and $C$ be two blocks with intervals~$[\alpha,\beta]$ and~$[\gamma,\delta]$, respectively, and let~$\beta < \gamma$.
Let~$s_B$ (resp.~$s_C$) be a selection of objects from~$B$ (respectively $C$) to move to the right.
We say that~$s_B$ and~$s_C$ are \myemph{compatible} if for all~$b \in s_B$ and all~$c \in C \setminus s_C$, the swap of~$b$ and~$c$ is rational for the two agents at the (unique) position where~$b$ and~$c$ can be swapped.
Otherwise, we say that~$s_B$ and~$s_C$ are \myemph{in conflict}.
\end{definition}

\noindent Observe that we can compute a unique pair of agents that can possibly swap~$b$ and~$c$ since we know how many objects between~$b$ and~$c$ are moved to the right.
This number is the sum of objects in~$s_B$ right of~$b$, objects in~$s_C$ left of~$c$, and $\gamma - \beta - 1$.
Hence, computing whether these two selections are compatible takes~$O(|B| \cdot |C| \cdot n)$ time as the agents can be computed in constant time per pair of objects and checking whether these agents can agree on swapping takes~$O(n)$ time.
It remains to find a selection for each block such that all of these selections are pairwise compatible.
We solve this problem using a reduction to 2-SAT which is known to be linear-time solvable~\cite{APT79}.

\begin{theorem}
  \label{thm:path}
\ROs{} on paths can be solved in~$O(n^4)$ time. 
\end{theorem}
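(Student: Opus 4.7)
The plan is to combine all of the structural machinery developed above into an algorithm that first guesses the type-$1$ object and then reduces the remaining choices to a 2-SAT instance. Specifically, I would first iterate over all $O(n)$ choices of the object $z$ that agent $I$ exchanges for $x$ in the final swap (this is the ``object of type~$1$'' in the intuition). For each candidate $z$, I would run the preprocessing from \Cref{lem:type} to compute types and subtypes of all other objects in $O(n^2)$ time, then compute all blocks in linear time by \Cref{lem:block}, and for each block of interval~$[\alpha,\beta]$ compute the at most two feasible selections of right-moving objects using \Cref{lem:2pos}. I would then discard any of these block-selections that is \emph{internally} infeasible --- that is, some required intra-block swap is not rational for the edge endpoints identified by \Cref{lem:swapPos}, or the selection conflicts with some object of type~$0$, or it conflicts with the already-known number and types of right-moving objects from blocks strictly to its left (the data needed for that count is determined block-wise by \Cref{lem:block}, independent of the specific selections made elsewhere).

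With at most two surviving selections per block, I would introduce one Boolean variable per block encoding which selection is used. For every pair $(B,C)$ of blocks with $B$ left of $C$, I would test all at most four combinations for compatibility as in \Cref{def:compatible}: for each pair $(b,c)$ with $b$ right-moving in $B$ and $c$ left-moving in $C$, the number of intervening right-moving objects is determined (it is the count of right-moving objects in $B$ to the right of~$b$, plus the count of right-moving objects in $C$ to the left of~$c$, plus all right-moving objects in blocks strictly between $B$ and $C$, which is fixed by the types of those intermediate blocks). Hence the swap edge is uniquely identified via \Cref{lem:swapPos} in $O(n)$ time, and I can check rationality at its endpoints. For each incompatible combination I add the corresponding 2-clause. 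Finally I solve the resulting 2-SAT instance in linear time.

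For correctness, I would argue the two directions separately. If a swap sequence exists that ends with $I$ receiving $x$ via the chosen $z$, then \Cref{lem:interval,lem:2pos} force the projection onto each block to be one of the two enumerated selections, and pairwise compatibility is necessary, so the 2-SAT instance is satisfiable. Conversely, given a satisfying assignment, the selections determine the final destination of every object; since the relative order of the right-moving objects (and independently of the left-moving ones) can never change, any ordering of the prescribed swaps that respects monotone movement is executable, and by construction every swap along the way is rational. For the running time, the per-$z$ cost is dominated by $O(n^2)$ pairs of blocks times $O(|B|\cdot |C|\cdot n)$ compatibility work, which sums to $O(n^3)$; multiplied by the $O(n)$ choices of $z$ this gives $O(n^4)$.

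The main obstacle I anticipate is conceptual rather than computational: verifying that \emph{pairwise} compatibility between block-selections is actually sufficient, i.e.\ that there is no implicit three-way or global constraint that would require a more expressive (and potentially NP-hard) CSP. The key leverage is \Cref{lem:2pos} reducing each block to a binary choice together with \Cref{lem:block} fixing the coarse structure (sizes and types of objects traveling right in every block) independently of the binary choices; this decouples the interaction of any two blocks from all others and is precisely what admits a 2-SAT encoding. A secondary subtlety is to set up the filtering step against type-$0$ objects and against partial prefixes carefully enough that no feasible selection is accidentally discarded, for which the positional arguments preceding \Cref{def:blocks} are exactly what is needed.
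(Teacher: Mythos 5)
Your proposal is correct and follows essentially the same route as the paper: guess the object $z$ exchanged for $x$ in the final swap, compute types, subtypes and blocks, reduce each block to two candidate selections via \Cref{lem:2pos}, filter inconsistent ones (including against type-$0$ objects), and encode pairwise conflicts between block-selections as a 2-SAT instance with one variable per block, yielding $O(n^3)$ work per choice of $z$ and $O(n^4)$ overall. The paper additionally performs an explicit propagation of forced selections before building the formula, but this is subsumed by the 2-SAT solver in your version and does not change correctness or the running time.
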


\begin{proof}
  For each object~$z$ of type~$n-I$ (there are $O(n)$ many), we do the following.
  First, compute the type, subtype and block of each object and use \cref{lem:2pos} to compute two possible selections for each block in overall~$O(n^2)$ time.
  Second, compute in~$O(n^3)$ time pairs of selections compatible to each other.
  Third, check in~$O(n^3)$ time whether these selections are consistent.

If some selection~$s$ for a block~$B$ is not consistent or if there is some other block~$C$ such that~$s$ is not compatible with either selection for~$C$, then we know we have to take the other possible selection~$s'$ for~$B$ and can therefore ignore all selections that are in conflict with~$s'$.
If this rules out some selection, then we can repeat the process.
After at most~$n$ rounds of which each only takes~$O(n)$ time, we arrive at a situation where there are exactly two consistent selections for each block and the task is to find a set of pairwise compatible selections that include a selection for each block.
We finally reduce this problem to a 2-SAT formula.

We start with a variable~$v_B$ for each block~$B$ which is set to true if we move all objects of subtype $r$ to the right and false otherwise.
For each pair~$s,s'$ of selections that are in conflict with one another, let~$B$ be the block of~$s$ and~$C$ be the block of~$s'$.
Without loss of generality, let~$s$ and~$s'$ be the selections that are represented by~$v_B,v_C$ being set to true (otherwise swap~$\lnot v_B$ with~$v_B$ or~$\lnot v_C$ with~$v_C$ in the following clause).
Since we cannot select~$s$ and~$s'$ at the same time, we add a clause~$(\lnot v_B \lor \lnot v_C)$ to our 2-SAT formula.

\noindent Observe that if there is a set of pairwise non-conflicting selections, then the 2-SAT formula is satisfied by the corresponding assignment of the variables and if the formula is satisfied, then this assignment corresponds to a solution to the original \ROs-instance.
Since 2-SAT can be solved in linear time~\cite{APT79} and the constructed formula has~$O(n^2)$ clauses of constant size, our statement follows. 
\end{proof}

\noindent We conclude by conjecturing that the case when the underlying graph is a cycle can also be solved in polynomial time.
The idea is similar to the case of a path.
The main difference is that it may happen that some objects may be swapped with~$x$ twice.
Since we ``guess'' the object~$z$ that is last swapped with~$x$ (and we can also ``guess'' the moving direction of~$x$ in a similar fashion), we can compute the first edge where~$x$ and~$z$ are swapped.
This determines the number~$k$ of objects which initially start between~$x$ and~$z$ and are also required to swap twice with~$x$ in any solution.
We then apply the same type-based arguments as in the proof for paths, but we incorporate the additional information that the objects of types~$1,\ldots,k+1$ that are swapped with~$x$ are also the objects with the last (largest)~$k+1$ types that are swapped with~$x$.

\section{Preferences of  Length at Most Four}
\label{sec:complete-graphs}

In this section we investigate the case where we do not impose any restriction on the underlying social network, i.e., it is a complete graph. 
We find that \ROs{} remains NP-complete in this case. 
This implies that the computational hardness of the problem does not stem from restricting the possible swaps between agents by an underlying social network.
Moreover, the hardness holds even if each agent has at most four objects in her preference list.
To show NP-hardness, we reduce from a restricted NP-complete variant of the \threesat{} problem~\cite{Tovey84}.
In this variant, each clause has either $2$ or $3$ literals, and each variable
  appears once as a negative literal and either once or twice as a positive literal.
  We note that in the original NP-hardness reduction by \citet{GouLesWil2017}, the lengths of the preference lists are unbounded.

  \newcommand{\thmnpclengthfour}{
  \ROs{} is NP-complete on complete graphs, even if each preference list has length at most four.    
  }

\begin{theorem}\label[theorem]{thm:NP-c-length-4}
  \thmnpclengthfour
\end{theorem}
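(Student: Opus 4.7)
The plan is to reduce from \rsat{}, the variant of 3-SAT in which each clause has two or three literals and every variable appears exactly once as a negative literal and either once or twice as a positive literal; the paper already cites this as NP-complete. Given a \rsat{} instance $\phi$ with variables $v_1,\ldots,v_n$ and clauses $C_1,\ldots,C_m$, I would construct a \ROs{} instance whose underlying graph is the complete graph on the agent set (so adjacency plays no role), with a distinguished agent $I$ and object $x$ such that $x$ is reachable for $I$ from the initial assignment if and only if $\phi$ is satisfiable. The flexibility of the complete graph lets me concentrate entirely on preferences; the whole difficulty is in the lengths.

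The construction has two main gadget types. For each variable $v_i$, I would build a variable gadget on a constant number of agents, organized around a central ``switch'' object that can be moved in exactly one of two mutually exclusive directions. The true direction releases one or two \emph{activation tokens} (one per positive occurrence of $v_i$) destined for the corresponding clause gadgets, while the false direction releases a single activation token for the clause containing $\lnot v_i$. Mutual exclusivity of the two directions is enforced by the definition of rational trades: the switch object is given away only once, and no agent can ever take back an object she has already released. For each clause $C_j$, I would build a clause gadget whose job is to gate the flow of $x$ through a ``blocker'' object $b_j$; the blocker can be swapped out of the way only if at least one activation token corresponding to a satisfying literal of $C_j$ reaches the designated agent. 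The clause gadgets are chained in a linear ``spine'' that $x$ must fully traverse to reach $I$, with the spine agents' preferences designed so that $x$ cannot skip a clause.

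For correctness, the forward direction would fix any satisfying assignment and give an explicit valid swap sequence: first, in each variable gadget, push the switch in the direction dictated by the assignment, so that the activation tokens for all satisfied literal occurrences are released; second, push $x$ along the spine, using at each clause gadget one of the activation tokens now available to unblock $b_j$. The backward direction would argue that any swap sequence delivering $x$ to $I$ canonically induces an assignment, because in each variable gadget the switch has been moved in exactly one of its two directions, and because $x$ traversed every clause gadget, every clause's blocker must have been removed using a token from a true literal.

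The main obstacle, and the technical heart of the reduction, is respecting the length-at-most-four preference bound everywhere while still ruling out spurious swap sequences. With at most three objects preferred to her initial one, each agent can participate meaningfully in very few swaps, so both the variable gadget and the clause gadget have to be split into several small pieces, each piece enforcing a single rigid step whose alternatives (if any) are ruled out by length-four preferences alone. Getting the switch mechanism tight (so that one cannot ``partially'' commit to both truth values and satisfy the clause that way), and ensuring that the spine cannot be short-circuited by some clever rational trade among gadget agents, is what requires careful case analysis. A similar hardness reduction can then be adapted, by replacing the clique with a bounded-degree backbone that connects only the agents that actually need to interact, to yield the stated strengthening where additionally the maximum degree is at most five.
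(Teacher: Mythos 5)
Your high-level architecture coincides with the paper's: the same restricted \threesat{} source, variable gadgets that release ``token'' objects for the occurrences consistent with one truth value only, clause gadgets chained into a spine that $x$ must traverse in order, and a gate at each clause that lets $x$ pass only in exchange for a token of a satisfying literal. However, your proposal stops exactly where the content of the theorem begins. The statement being proved is precisely that such gadgets \emph{can} be realized with preference lists of length at most four, and you defer the entire construction (``has to be split into several small pieces\ldots requires careful case analysis'') without exhibiting a single preference list or verifying that no spurious swap sequence exists. Without the explicit lists there is nothing to check: the mutual-exclusivity of the ``switch,'' the impossibility of short-circuiting the spine, and the length bound are all claims about concrete preference orders, not about the architecture.

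For comparison, the paper's realization is quite economical and worth internalizing. A variable occurring twice needs only one agent $X_i^1$ holding one token $x_i^1$ with list $d_{j'}^{\cdot}\succ d_j^{\cdot}\succ x_i^1$; exclusivity is automatic because she owns a single token and can trade it to only one of the two clauses' gate agents. A variable occurring three times needs two agents $X_i^1,X_i^2$ holding $x_i^2,x_i^1$, where setting the variable true is encoded by an optional preliminary internal swap that makes \emph{both} positive tokens deliverable, while skipping it leaves only the negative token deliverable --- this is the nontrivial case your ``switch object'' metaphor glosses over. Each clause $C_j$ gets agents $A_j$ and pairs $B_j^z,D_j^z$: the spine object ($x$ or $a_{j-1}$) enters some $B_j^z$ via $A_j$, and $B_j^z$ releases $x$ only in exchange for the token $\tau(C_j,\ell)$, which $D_j^z$ must first obtain from the variable gadget; $D_j^z$ then forwards $x$ to the next clause in exchange for $a_j$. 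Every list so produced has length at most four, and the backward direction is proved by an induction showing that $x$ can only have arrived at $I$ through one such $B_j^z/D_j^z$ pair per clause, which pins down a well-defined satisfying assignment. Until you supply gadgets at this level of explicitness and carry out the corresponding case analysis, the proposal is a plausible plan rather than a proof.
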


\begin{proof}
  We only focus on the hardness reduction as containment in NP is shown in~\cref{prop:RO-in-NP}.
  We reduce from the restricted \threesat{} variant mentioned in the beginning of the section. 
  The general idea of the reduction is to introduce for each literal of a clause a pair of \emph{private} clause agents to pass through the target object if the corresponding literal is set to $\ttrue$
  and to introduce for each variable some variable agents to make sure that no two pairs of private clause agents for which the corresponding literals are complement to each other will pass through the target object in the same sequence of swaps. 
  In this way, we can identify a satisfying truth assignment if and only if there is a sequence of swaps that makes the target object reach our agent.
  Let $\phi=(\mathcal{V}, \mathcal{C})$ be an instance of the restricted \threesat{} problem
  with variables~$\mathcal{V}=\{v_1,\ldots, v_n\}$ and clauses~$\mathcal{C}=\{C_1,\ldots, C_m\}$.
  For each variable~$v_i\in \mathcal{V}$,
  let $\occ(i)$ be the number of occurrences of variable~$v_i$ (note that $\occ(i)\in\{2,3\}$),
  let $\nu(i)$ denote the index of the clause that contains the negative literal~$\negv_i$,
  and let $\pi_1(i)$ and $\pi_2(i)$ be the indices of the clauses with $\pi_1(i) < \pi_2(i)$ that contain
  the positive literal~$v_i$; if $v_i$ only appears twice in $\phi$, then we simply neglect~$\pi_2(i)$.
  Now, we construct an instance of \ROs{} as follows.

  \myparagraph{Agents and Initial Assignment~$\sigma_0$.}
  For each variable~$v_i\in \mathcal{V}$, introduce $\occ(i)\!-\!1$ \myemph{variable agents}, denoted as $X_i^{z}$ with initial objects~$x_i^{\occ(i)-z}$, $z\in \{1,\ldots, \occ(i)\!-\!1\}$.
  For each clause~$C_j \in \mathcal{C}$, introduce $2|C_j|+1$ \myemph{clause agents}, denoted as $A_j$, $B_j^{z}$, and $D_j^{z}$, $z\in \{1,\ldots, |C_j|\}$, where $|C_j|$ denotes the number of literals contained in~$C_j$.
  The initial objects of $B^z_j$, and $D_j^{z}$ are $b_j^z$, and $d_j^z$, respectively.
  The initial object of $A_1$ is our target object~$x$ and the initial object of $A_{j}$, $j\ge 2$, is $a_{j-1}$.
  Finally, our target agent~$I$ initially holds object~$a_m$.

  \myparagraph{Preference Lists.}
  For each clause~$C_j\in \mathcal{C}$, we use an arbitrary but fixed order of the literals in $C_j$ to define a bijective function~$\ff_j \colon C_j \to  \{1,\ldots, |C_j|\}$, which assigns to each literal contained in $C_j$ a distinct number from $\{1,\ldots, |C_j|\}$.

  \begin{compactenum}[(1)]
    \item For each variable~$v_i\in \mathcal{V}$, let $j=\nu(i)$, $j'=\pi_1(i)$ (and $j''=\pi_2(i)$ if~$\occ(i)=3$)
    and do the following:

    \begin{compactenum}[(i)]
      \item If $\occ(i)=2$, then $X_i^1$ has preference list
      \begin{align*}
        d_{j'}^{\ff_{j'}(v_i)} \succ d_{j}^{\ff_{j}(\negv_i)} \succ \fbox{$x_i^1$}\,.
        \end{align*}

      \item If $\occ(i) = 3$, then the preference lists of $X_i^1$ and $X_i^2$ are
      \begin{align*}
       X^1_i \colon &  d_{j'}^{\ff_{j'}(v_i)} \succ x_i^1 \succ d_{j}^{\ff_{j}(\negv_i)}  \succ
        \fbox{$x_i^2$}\,,\\
        X^2_i \colon & d_{j''}^{\ff_{j''}(v_i)} \succ x_i^2 \succ~\fbox{$x_i^1$}\,.
      \end{align*}
      \end{compactenum}

    \item We define an auxiliary function to identify an object for each clause~$C_j\in \mathcal{C}$ and each literal~$\ell\in C_j$ contained in~$C_j$:    
    \begin{align*}  \tau(C_j,\ell) \coloneqq
      \begin{cases}
        x_i^1, &\text{if } \occ(i)=2 \text{ and } \ell=\negv_i \text{ for some variable } v_i, \\
        x_i^2, &\text{if } \occ(i)=3 \text{ and } \ell=\negv_i \text{ for some variable } v_i, \\
        x_i^1, &\text{if } \ell = v_i \text{ and } j=\pi_1(i) \text{ for some variable } v_i,\\
        x_i^2, &\text{if } \ell = v_i \text{ and } j=\pi_2(i)\text{ for some variable } v_i.
      \end{cases}
      \end{align*}
      
      The preference lists of the clause agents corresponding to $C_1$ are:
      \begin{align*}
        A_1 \colon   b_1^1 \succ \cdots \succ b_1^{|C_1|} \succ \fbox{$x$}\,.
      \end{align*}
    For each literal~$\ell \in C_1$, the preference lists of $B_1^{\ff_1(\ell)}$ and $D_1^{\ff_1(\ell)}$ are
    \begin{align*}
      B_1^{\ff_1(\ell)} \colon & \tau(C_1, \ell) \succ x  \succ \fbox{$b_1^{\ff_1(\ell)}$},\text{ and }\\ 
      D_1^{\ff_1(\ell)} \colon &  a_1 \succ x \succ \tau(C_1, \ell) \succ \fbox{$ d_1^{\ff_1(\ell)}$}\,.
    \end{align*}
    For each index~$j\in \{2,\ldots, m\}$, the preference lists of the clause agents corresponding to $C_j$ are
    \begin{align*}
      A_j \colon & b_{j}^1 \succ \cdots \succ b_j^{|C_j|} \succ \fbox{$a_{j-1}$},\\ 
      B_j^{\ff_j(\ell)} \colon& \tau(C_j, \ell) \succ x \succ a_{j-1} \succ \fbox{$b_j^{\ff_j(\ell)}$}, \text{ and }\\
      \text{for all~}\ell \in C_j,\text{ let } D_j^{\ff_j(\ell)} \colon&  a_j \succ x \succ \tau(C_j, \ell) \succ \fbox{$d_j^{\ff_j(\ell)}$}\,.
      \end{align*}

    \item Let the preference list of our target agent~$I$ be $x \succ$ \fbox{$a_m$}.
\end{compactenum}

To finish the construction, we let the underlying graph be complete.
    One can verify that the constructed preference lists have length at most four.

  \myparagraph{The underlying graph~$G=(V,\binom{V}{2})$.} All agents are pairwise connected by an edge in the underlying graph.
  By the definition of rational trades, indeed, we can delete all irrelevant edges, say~$\{u,v\}$,
  if $u$ and $v$ will never agree to trade, i.e., there are no two objects, say $i,j$, which exist in the preference lists of both~$u$ and $v$ such that $u$ prefers~$i$ to~$j$ while $v$ prefers~$j$ to~$i$.
  By carefully examining the preference lists of the agents,
  we observe that only the following edges~$E$ are relevant for $V$.
  \begin{compactenum}[(1)]
    \item\label{edge:A-B-D}  For each clause~$C_j$ the corresponding vertices form a \myemph{generalized star} with $A_j$ being the center and each leaf~$D_j^z$ having distance two to the center. 
    Formally, for each clause~$C_j\in \mathcal{C}$ and for each two clause agents~$B_j^z$, $D_j^z$ with $1\le z \le |C_j|$ let $\{A_j, B_j^z\}, \{B_j^z, D_j^z\} \in E$.

    \item\label{edge:B-B} For each $j\in \{1,\ldots, m-1\}$, the two vertex sets~$\{D_{j}^{z} \mid 1\le z \le |C_{j}|\}$ and $\{B_{j+1}^{z'} \mid 1\le z' \le |C_{j+1}|\}$ form a complete bipartite graph.
    Formally, for each $j\in \{1,\ldots, m-1\}$,
    and for each two clause agents~$D_{j}^{z}$ and $B_{j+1}^{z'}$ with $1\le z \le |C_{j}|$ and $1\le z' \le |C_{j+1}|$, let $\{D_{j}^z, B_{j+1}^{z'}\}\in E$. 
    \item To connect the clause agents and variable agents, for each variable~$v_i\in \mathcal{V}$, we do the following.
    \begin{compactenum}[(a)]
      \item  If $\occ(i)=2$, then add $\{X_i^1, B_{j'}^{f_{j'}(v_i)}\}$ and $ \{X_i^1, B_{j}^{f_j(\negv_i)}\}$ to~$E$, where $j=\nu(i)$ and $j'=\pi_1(i)$.
      \item  If $\occ(i)=3$, then add $\{X_i^1, B_{j'}^{f_{j'}(v_i)}\}$, $\{X_i^1, B_{j}^{f_j(\negv_i)}\}$, $ \{X_i^2, B_{j''}^{f_{j''}(v_i)}\}$, and $\{X_i^1,X_i^2\}$ to~$E$, where $j=\nu(i)$, $j'=\pi_1(i)$, and $j''=\pi_2(i)$.
    \end{compactenum}
      \item Our target agent~$I$ is adjacent to all clause agents~$D_m^z$, $z\in \{1, \ldots, |C_m|\}$.
    \end{compactenum}

  \allowdisplaybreaks[1]

    \begin{example}\label{ex:NP-c-l-4}
    For an illustration of the construction, let us consider the following restricted \threesat{} instance:
    \begin{alignat*}{4}
      \mathcal{V}=\{v_1, v_2, v_3, v_4\}, &  \mathcal{C} = \{C_1=(v_2\vee v_3), C_2 =(v_1\vee \negv_2 \vee \negv_3),    C_3 =(\negv_1\vee v_2 \vee v_4),   C_4 =(v_3\vee \negv_4) \}.
    \end{alignat*}
    Our instance for \ROs{} contains the following agents.
    \begin{align*}
      V=& \{A_1,B_1^1, B_1^2, D_1^1, D_1^2\} \cup \{A_2,B_2^1,B_2^2,B_2^3, D_2^1,D_2^2, D_2^3\} \cup \\
        &\{A_3,B_3^1,B_3^2,B_3^3,D_3^1,D_3^2,D_3^3\}\cup \{A_4,B_4^1, B_4^2,D_4^1,D_4^2\} \cup\\
          &\{X_1^1,X_2^1,X_2^2,X_3^1,X_3^2,X_4^1\} \cup \{I\}.
    \end{align*}
    The preference lists of these agents are

    \begin{table*}
    \begin{tabular}{r@{}lr@{}lr@{}lr@{}lr@{}}
      $A_1\colon$ & $b_1^1 \succ b_1^2 \succ$ \fbox{$x$},
      &~~$A_2\colon$ &$b_2^1 \succ b_2^2  \succ b_2^3 \succ$ \fbox{$a_1$},
      &~~$A_3\colon$ & $b_3^1 \succ b_3^2 \succ b_3^3 \succ$ \fbox{$a_2$},
      &~~$A_4\colon$ & $b_4^1 \succ b_4^2 \succ$ \fbox{$a_3$},\\
      $B_1^1\colon$ & $x_2^1 \succ x \succ$ \fbox{$b_1^1$}, & $B_2^1\colon$& $x_1^1 \succ x \succ a_1 \succ$ \fbox{$b_2^1$}, & $B_3^1\colon$ & $x_1^1 \succ x \succ a_2 \succ$ \fbox{$b_3^1$}, & $B_4^1\colon$& $x_3^2 \succ x \succ a_3 \succ$ \fbox{$b_4^1$},\\
      $B_1^2\colon$& $x_3^1 \succ x \succ$ \fbox{$b_1^2$}, & $B_2^2\colon$& $x_2^2 \succ x \succ a_1 \succ$ \fbox{$b_2^2$}, & $B_3^2\colon$& $x_2^2 \succ x \succ a_2 \succ$ \fbox{$b_3^2$},
      & $B_4^2\colon$& $x_4^1 \succ x \succ a_3 \succ$ \fbox{$b_4^2$},\\      
      & & $B_2^3\colon$& $x_3^2 \succ x \succ a_1 \succ$ \fbox{$b_2^3$},& $B_3^3\colon$& $x_4^1 \succ x \succ a_2 \succ$ \fbox{$b_3^3$},\\
      $D_1^1\colon$& $a_1 \succ x\succ x_2^1 \succ$ \fbox{$d_1^1$}, &
      $D_2^1\colon$& $a_2\succ x \succ x_1^1 \succ$ \fbox{$d_2^1$}, &
      $D_3^1\colon$& $a_3\succ x \succ x_1^1 \succ$ \fbox{$d_3^1$}, &
      $D_4^1\colon$& $a_4\succ x \succ x_3^2 \succ$ \fbox{$d_4^1$}, \\
      $D_1^2\colon$& $a_1\succ x \succ x_3^1  \succ$ \fbox{$d_1^2$}, &
      $D_2^2\colon$& $a_2\succ x \succ x_2^2  \succ$ \fbox{$d_2^2$}, &
      $D_3^2\colon$& $a_3\succ x \succ x_2^2  \succ$ \fbox{$d_3^2$}, &
      $D_4^2\colon$& $a_4\succ x \succ x_4^1  \succ$ \fbox{$d_4^2$},\\
      &&  $D_2^3\colon$ & $a_2\succ x \succ x_3^2 \succ$ \fbox{$d_2^3$},&  $D_3^3\colon$ & $a_3\succ x \succ x_4^1 \succ$ \fbox{$d_3^3$},\\
      $I \colon$ & $x \succ$ \fbox{$a_4$}, \\[2ex]
      $X_1^1\colon$ & $d_2^1 \succ d_3^1 \succ$ \fbox{$x_1^1$}, & $X_2^1 \colon$ & $d_1^1\succ x_2^1 \succ d_2^2 \succ$ \fbox{$x_2^2$}, & $X_3^1 \colon$ & $d_1^2 \succ x_3^1 \succ d_2^3 \succ$ \fbox{$x_3^2$}, & $X_4^1\colon$ & $d_3^3 \succ d_4^2 \succ$ \fbox{$x_4^1$},\\
      & & $X_2^2 \colon$ & $d_3^2\succ x_2^2 \succ$ \fbox{$x_2^1$}, & $X_3^2 \colon$ & $d_4^1 \succ x_3^2 \succ$ \fbox{$x_3^1$}. &&\hfill (of \cref{ex:NP-c-l-4})~$\diamond$
    \end{tabular}
    \end{table*}

\end{example}

    \noindent The underlying graph is complete. Nevertheless, only the edges as depicted in \cref{fig:example-NP-c-l-4} turn out to be relevant for swaps.

    \begin{figure*}[t!h]
      \centering
      \begin{tikzpicture}
      \def \xdistdb {1}
      \def \xdistbd {55}
      \def \ydist {.5}

      \node[agentnode] (B21) {$B_2^1$};
      \node[agentnode, below = \ydist of B21] (B22) {$B_2^2$};
      \node[agentnode, below = \ydist of B22] (B23) {$B_2^3$};
      
      \node[agentnode, right = \xdistdb of B21] (D21) {$D_2^1$};
      \node[agentnode, right = \xdistdb of B22] (D22) {$D_2^2$};
      \node[agentnode, right = \xdistdb of B23] (D23) {$D_2^3$};

      \gettikzxy{(B21)}{\xdta}{\ydta};
      \gettikzxy{(D21)}{\xbta}{\ybta};
      \gettikzxy{(B22)}{\xdtb}{\ydtb};
      \gettikzxy{(B23)}{\xdtc}{\ydtc};

      \def \ydist {30}
      
      \node[agentnode] at (\xdta*0.2+\xbta*0.8, \ydta+\ydist) (A2) {$A_2$};

      \node[agentnode] at (\xdta, \ydtc-\ydist*1.5) (X21) {$X_2^1$};
      \node[agentnode] at (\xbta, \ydtc-\ydist*1.5) (X22) {$X_2^2$};
      
      \def \ydist {.5}

      \node[agentnode] at (\xdta-\xdistbd, \ydta/2+\ydtb/2) (D11) {$D_1^1$};
      \node[agentnode] at (\xdta-\xdistbd, \ydtc/2+\ydtb/2) (D12) {$D_1^2$};

      \def \ydist {.5}

      \node[agentnode, left = \xdistdb of D11] (B11) {$B_1^1$};
      \node[agentnode, left = \xdistdb of D12] (B12) {$B_1^2$};

      \gettikzxy{(D11)}{\xxa}{\yya};
      \gettikzxy{(B11)}{\xxb}{\yyb};

      \def \ydist {30}
      
      \node[agentnode] at (\xxa*0.8+\xxb*0.2, \yya+\ydist) (A1) {$A_1$};
      
      \node[agentnode] at (\xxa*0.5+\xxb*0.5, \ydtc-\ydist*1.5) (X11) {$X_1^1$};
      
      \gettikzxy{(D21)}{\xbta}{\ybta};
      \gettikzxy{(D22)}{\xbtb}{\ybtb};
      \gettikzxy{(D23)}{\xbtc}{\ybtc};

      \def \ydist {.5}

      \node[agentnode] at (\xbta+\xdistbd, \ybta) (B31) {$B_3^1$};
      \node[agentnode] at (\xbta+\xdistbd, \ybtb) (B32) {$B_3^2$};
      \node[agentnode] at (\xbta+\xdistbd, \ybtc) (B33) {$B_3^3$};

      \def \ydist {.5}

      \node[agentnode, right = \xdistdb of B31] (D31) {$D_3^1$};
      \node[agentnode, right = \xdistdb of B32] (D32) {$D_3^2$};
      \node[agentnode, right = \xdistdb of B33] (D33) {$D_3^3$};

      \gettikzxy{(D31)}{\xxa}{\yya};
      \gettikzxy{(B31)}{\xxb}{\yyb};

      \def \ydist {30}
      
      \node[agentnode] at (\xxa*0.8+\xxb*0.2, \yya+\ydist) (A3) {$A_3$};
      \node[agentnode] at (\xxb, \ydtc-\ydist*1.5) (X31) {$X_3^1$};
      \node[agentnode] at (\xxa, \ydtc-\ydist*1.5) (X32) {$X_3^2$};

      \def \ydist {.5}
      
      \gettikzxy{(D31)}{\xbta}{\ybta};
      \gettikzxy{(D32)}{\xbtb}{\ybtb};
      \gettikzxy{(D33)}{\xbtc}{\ybtc};
      \node[agentnode] at (\xbta+\xdistbd, \ybta/2+\ybtb/2) (B41) {$B_4^1$};
      \node[agentnode] at (\xbta+\xdistbd, \ybtb/2+\ybtc/2) (B42) {$B_4^2$};

      \def \ydist {.5}

      \node[agentnode, right = \xdistdb of B41] (D41) {$D_4^1$};
      \node[agentnode, right = \xdistdb of B42] (D42) {$D_4^2$};

      \gettikzxy{(D41)}{\xxa}{\yya};
      \gettikzxy{(B41)}{\xxb}{\yyb};

      \def \ydist {30}
      
      \node[agentnode] at (\xxa*0.8+\xxb*0.2, \yya+\ydist) (A4) {$A_4$};

      \node[agentnode] at (\xxa*0.5+\xxb*0.5, \ydtc-\ydist*1.5) (X41) {$X_4^1$};
      
       \def \ydist {.5}
      
      \gettikzxy{(D41)}{\xbta}{\ybta};
      \gettikzxy{(D42)}{\xbtb}{\ybtb};

      \node[agentnode] at (\xbta+\xdistbd, \ybta/2+\ybtb/2) (I) {${\,\,}I^{\;\;}$};


      \foreach \j in {1,4} {
        \foreach \i in {1,2} {
          \draw (B\j\i) -- (D\j\i);
          \draw (A\j) -- (B\j\i);
        }
      }
      \foreach \j in {2,3} {
        \foreach \i in {1,2,3} {
          \draw (B\j\i) -- (D\j\i);
          \draw (A\j) -- (B\j\i);
        }
      }

      \foreach \i in {11,12} {
        \foreach \j in {21,22,23} {
          \draw (D\i) -- (B\j);
        }
      }
      
      \foreach \i in {21,22,23} {
        \foreach \j in {31,32,33} {
          \draw (D\i) -- (B\j);
        }
      }
      
      \foreach \i in {31,32,33} {
        \foreach \j in {41,42} {
          \draw (D\i) -- (B\j);
        }
      }
      \draw (I) -- (D41);
      \draw (I) -- (D42);

      \foreach \i / \j in {11/21, 11/31, 21/11,21/22, 22/32, 31/12,31/23,32/41,41/33,41/42} {
        \draw[red] (X\i) -- (D\j);
      }

      \draw[red] (X21) -- (X22);
      \draw[red] (X31) -- (X32);
      
      \def \soff {.3}
      \draw[gray, rounded corners]
      ($(A1.north west)+(0,\soff)$) --
      ($(B11.north west)+(-\soff,0)$) --
      ($(B12.south west)+(-\soff, -\soff)$) --
      ($(D12.south east)+(\soff, -\soff)$) --
      ($(D11.north east)+(\soff, 0)$) --
      ($(A1.north east)+(\soff*.3, \soff)$) --
      cycle;

      \draw[gray, rounded corners]
      ($(A2.north west)+(0,\soff)$) --
      ($(B21.north west)+(-\soff,0)$) --
      ($(B23.south west)+(-\soff, -\soff)$) --
      ($(D23.south east)+(\soff, -\soff)$) --
      ($(D21.north east)+(\soff, 0)$) --
      ($(A2.north east)+(\soff*.3, \soff)$) --
       cycle;

      \draw[gray, rounded corners]
      ($(A3.north west)+(0,\soff)$) --
      ($(B31.north west)+(-\soff,0)$) --
      ($(B33.south west)+(-\soff, -\soff)$) --
      ($(D33.south east)+(\soff, -\soff)$) --
      ($(D31.north east)+(\soff, 0)$) --
      ($(A3.north east)+(\soff*.3, \soff)$) --
      cycle;

      \draw[gray, rounded corners]
      ($(A4.north west)+(0,\soff)$) --
      ($(B41.north west)+(-\soff,0)$) --
      ($(B42.south west)+(-\soff, -\soff)$) --
      ($(D42.south east)+(\soff, -\soff)$) --
      ($(D41.north east)+(\soff, 0)$) --
      ($(A4.north east)+(\soff*.3, \soff)$) --
      cycle;
    \end{tikzpicture}
    \caption{Underlying graph with only relevant edges for the profile constructed according to the reduction given in \cref{thm:NP-c-length-4}}
    \label{fig:example-NP-c-l-4}
  \end{figure*}
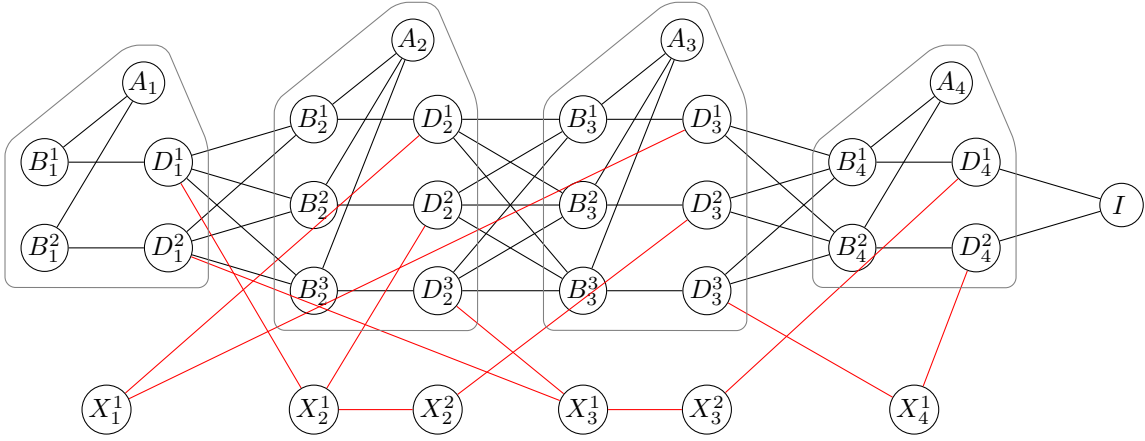
  
    Now, we show that instance~$\phi=(\mathcal{V}, \mathcal{C})$, with $n$ variables~$\mathcal{V}$ and $m$ clauses~$\mathcal{C}$, admits a satisfying truth assignment if and only if object~$x$ which agent~$A_1$ initially holds is reachable for our agent~$I$ which initially holds $a_m$.

    For the ``only if'' part, assume that $\beta\colon \mathcal{V}\to \{\ttrue,\tfalse\}$ is a satisfying assignment for~$\phi$.
    Intuitively, this satisfying assignment will guide us to find a sequence of swaps, making object~$x$ reach agent~$I$.
    
    First, for each variable~$v_i\in \mathcal{V}$, if $\occ(i)=3$,
    implying that there are two variable agents ($X_i^1$ and $X_i^2$) for~$v_i$,
    and if $\beta(v_i)=\ttrue$, then let agent~$X_i^1$ and $X_i^2$ swap their initial objects so that $X_i^1$ and $X_i^2$ hold $x_i^1$ and $x_i^2$, respectively.

    For each clause~$C_j$, identify a literal, say~$\ell_j$, which satisfies $C_j$ under~$\beta$,
    and do the following.
    \begin{compactenum}
      \item Let agent~$A_j$ and agent~$B_j^{f_j(\ell_j)}$ swap their initial objects.
      \item Let agent~$D_j^{f_j(\ell_j)}$ and agent~$X_i^z$ swap their current objects such that
      \begin{compactenum}[(a)]
        \item if $\ell_j = \negv_i$, then $z=1$ (note that in this case agent~$X_i^1$
        is holding object~$x_i^1$ if $\occ(i)=2$, and is holding object~$x_i^2$ if $\occ(i)=3$),
        \item if $\ell_j = v_i$ and $j=\pi_1(i)$, then $z=1$ (note that in this case agent~$X_i^1$ is holding object~$x_i^1$), and
        \item if $\ell_j = v_i$ and $j=\pi_2(i)$, then $z=2$ (note that in this case agent~$X_i^2$ is holding object~$x_i^2$).
      \end{compactenum}
    \end{compactenum}
    After these swaps,
    agent~$B_1$ is holding object~$x$.
    Each agent~$B_j^{f_j(\ell_j)}$, $2\le j \le m$, is holding object~$a_{j-1}$.
    Each agent~$D_j^{f_j(\ell_j)}$ is holding object~$\tau(C_j, \ell_j)$.
    Now, to let $x$ reach agent~$I$, we let it pass through agents~$B_1$, $D_1$, $B_2$, $D_2$, $\ldots$, $B_m$, $D_m$, and finally to $I$.
    Formally, iterating from $j=1$ to $j=m-1$, we do the following:
    \begin{compactenum}
      \item Let agent~$B_j$ and $D_j$ swap their current objects so that $D_j$ holds object~$x$.
      \item Let agent~$D_j$ and $B_{j+1}$ swap their current objects so that $B_{j+1}$ holds object.
    \end{compactenum}
    After these swaps, agent~$B_m$ obtains object~$x$.
    Let agent~$D_m$ swap its object~$\tau(D_m, \ell_m)$ with agent~$B_m$ for object~$x$.
    Finally, let agent~$I$ swap its object~$a_m$ with agent~$D_m$ for object~$x$.
    This completes the proof for the ``only if'' part.

    For the ``if'' part, assume that there is a sequence of swaps ~$(\sigma_0,\sigma_1,\ldots, \sigma_s)$ which makes object~$x$ reach agent~$I$, i.e.\ $\sigma_s(I)=x$.
    Now, we show how to construct a satisfying truth assignment for $\phi$.
    First, we observe the following properties which will help us to identify a literal for each clause such that setting it to \texttt{true} will satisfy the clause.

    \begin{claim}\label{claim:clause-literal}
      For each clause~$C_j\in \mathcal{C}$,
      there exist an assignment~$\sigma_r$, $1\le r < s$, and a literal~$\ell_j \in C_j$ such that $\sigma_r$ admits a swap for agents~$B_j^{f_j(\ell_j)}$ and $D_j^{f_j(\ell_j)}$,
      i.e.
      \begin{compactenum}[(1)]
        \item $\sigma_r(B_j^{f_j(\ell_j)}) = x$,
        \item $\sigma_r(D_j^{f_j(\ell_j)}) = \tau(C_j, \ell_j)$,
        \item $\sigma_{r+1}(B_j^{f_j(\ell_j)}) = \tau(C_j, \ell_j)$, and
        \item $\sigma_{r+1}(D_j^{f_j(\ell_j)}) = x$.
      \end{compactenum}
    \end{claim}

    \begin{proof}
   \renewcommand{\qedsymbol}{(of \cref{claim:clause-literal})~$\diamond$}       
    In our initial assignment~$\sigma_0$, agent~$I$ holds object~$a_m$.
    To make object~$x$ reach agent~$I$, one can verify that 
    agent~$I$ must have swapped with some clause agent~$D_m^z$ with~$z\in \{1,\ldots, |C_m|\}$ since agent~$I$ only prefers $x$ to $a_m$,
    and only agents from $\{D_m^t \mid 1\le t \le |C_m|\}$ are willing to swap $x$ for $a_m$. 
    Let $\ell_m$ be the literal with $f_m(\ell_m)=z$; recall that $f_m$ is a bijection.
    In order to make agent~$D_m^z$ obtain object~$x$, by her preference list, she must be holding object~$\tau(C_m, \ell_m)$ and swap it for $x$ since no agent will swap with her for $d_m^z$.
    Observe that agent~$B_m^z$ is the only agent that prefers $\tau(C_m,\ell_m)$ to $x$.
    It follows that $B_m^z$ must have swapped with $D_m^z$ for object $\tau(C_m, \ell_m)$.
    This means that there must be an assignment~$\sigma_{r}$ in the sequence such that
      \begin{compactenum}[(1)]
        \item $\sigma_{r}(B_m^{z}) = x$,
        \item $\sigma_{r}(D_m^{z}) = \tau(C_m, \ell_m)$,
        \item $\sigma_{r+1}(B_m^{z}) = \tau(C_m, \ell_m)$, and
        \item $\sigma_{r+1}(D_m^{z}) = x$.
      \end{compactenum}

      Now, we show our statement through induction on the index~$j$ of the clause agents, $j\ge 2$.
      Assume that there is an assignment~$\sigma_{r_j}$ in the sequence and
      that $C_j$ contains a literal~$\ell_j$ such that
      \begin{compactenum}[(1)]
        \item $\sigma_{r_j}(B_j^{f_j(\ell_j)}) = x$,
        \item $\sigma_{r_j}(D_j^{f_j(\ell_j)}) = \tau(C_j, \ell_j)$,
        \item $\sigma_{r_j+1}(B_j^{f_j(\ell_j)}) = \tau(C_j, \ell_j)$, and
        \item $\sigma_{r_j+1}(D_j^{f_j(\ell_j)}) = x$.
      \end{compactenum}
      
      By the above assignment, it follows that agent~$B_j^{f_j(\ell_j)}$ must have swapped with some other agent for object~$x$.
      Since agent~$B^{f_j(\ell_j)}_j$ prefers $x$ only to objects~$a_{j-1}$ and $b_j^{f_j(\ell_j)}$
      and since no agent prefers~$b_j^{f_{j}(\ell_j)}$ to $x$,
      it follows that agent~$B_j^{f_j(\ell_j)}$ must have swapped with some other agent  for $x$ while holding object~$a_{j-1}$.
      Since only agents from $\{D_{j-1}^{t}\mid 1\le t \le |C_{j-1}|\}$ prefer $a_{j-1}$ to $x$, it follows that $B_j^{f_j(\ell_j)}$ must have swapped with some agent~$D_{j-1}^{z_{j-1}}$ with $z_{j-1}\in \{1,\ldots, |C_{j-1}|\}$ for object~$x$.
      Let $\ell_{j-1}$ be the literal with $f_{j-1}(\ell_{j-1})=z_{j-1}$.
      To perform such a swap, however, agent~$D_{j-1}^{z_{j-1}}$ must first obtain object~$x$.
      Similarly to the case with agent~$D_m$, agent~$D_{j-1}^{z_{j-1}}$ must once hold object~$\tau(C_{j-1}, \ell_{j-1})$ and swap it for object~$x$ since no agent will swap with her for~$d_{j-1}^{z_{j-1}}$.
      Observe that agent~$B^{z_{j-1}}_{j-1}$ is the only agent that prefers~$\tau(C_{j-1}, \ell_{j-1})$ to~$x$.
      It follows that agent~$D_{j-1}^{z_{j-1}}$, while holding object~$\tau(C_{j-1}, \ell_{j-1})$, swapped with agent~$B_{j-1}^{z_{j-1}}$ for object~$x$, i.e.\
      there is an assignment~$\sigma_{r_{j-1}}$ in the sequence and $C_{j-1}$ contains a literal~$\ell_{j-1}$ such that
      \begin{compactenum}[(1)]
        \item $\sigma_{r_{j-1}}(B_{j-1}^{f_{j-1}(\ell_{j-1})}) = x$,
        \item $\sigma_{r_{j-1}}(D_{j-1}^{f_{j-1}(\ell_{j-1})}) = \tau(C_{j-1}, \ell_{j-1})$,
        \item $\sigma_{r_{j-1}+1}(B_{j-1}^{f_{j-1}(\ell_{j-1})}) = \tau(C_{j-1}, \ell_{j-1})$, and
        \item $\sigma_{r_{j-1}+1}(D_{j-1}^{f_{j-1}(\ell_{j-1})}) = x$. \qedhere
      \end{compactenum}
  \end{proof}

  \noindent By the above claim, we can now define a truth assignment~$\beta$ for $\phi$.
  \begin{align*}
    & \text{For all } v_i \in \mathcal{V}, ~~\text{ let } \beta(v_i) \coloneqq   \begin{cases} \tfalse, &\text{if } D_j^{f_j(\negv_i)} \text{ swapped}  \text{ with } \\
      & B_j^{f_j(\negv_i)} \text{ for } x \text{, where } j=\nu(i),
      \\
      \ttrue, &\text{otherwise.}\end{cases}
  \end{align*}
  Recall that in~$\phi$ each variable appears exactly once as a negative literal.
  Thus, our $\beta$ is a well-defined truth assignment.
  To show that~$\beta$ is indeed a satisfying assignment, suppose, towards a contradiction, that $\beta$ does not satisfy some clause~$C_j \in \mathcal{C}$.
  By \cref{claim:clause-literal}, let $\ell_j\in C_j$ be a literal such that $D_j^{f_j(\ell_j)}$, while holding object~$\tau(C_j, \ell_j)$, swapped with $B_j^{f_j(\ell_j)}$ for $x$.
  Observe that $\ell_j\in \{v_i, \negv_i\}$ for some $v_i\in \mathcal{V}$.
  We distinguish between two cases, in each of which we will arrive at a contradiction.
  
    \myparagraph{Case 1: $\boldsymbol{\ell_j=\negv_i}$.} This implies that $\negv_i\in C_j$, $j=\nu(i)$. Thus, $D_j^{f_j(\negv_i)}$ swapped with $B_j^{f_j(\negv_i)}$ for $x$. By our definition of $\beta$ it follows that $\beta(v_i)=\tfalse$ which satisfies $C_j$--a contradiction.

    \myparagraph{Case 2: $\boldsymbol{\ell_j=v_i}$.} This implies that $v_i\in C_j$. Since $C_j$ is not satisfied by $\beta$ it follows that $\negv_i\notin C_j$ and $\beta(v_i)=\tfalse$.
    By our definition of $\beta$ it follows that $D_{j'}^{f_{j'}(\negv_i)}$, while holding object~$\tau(C_{j'}, \negv_i)$ swapped with $B_{j'}^{f_{j'}(\negv_i)}$ for $x$ where $j'=\nu(i)$ and $j'\neq j$.

    If $\occ(i)=2$, implying that there is exactly one variable agent, namely $X_i^1$ for $v_i$,
    then by our definition of $\tau$ it follows that $\tau(C_j, v_i)=\tau(C_{j'}, \negv_i)=x_i^1$.
    To be able to swap away object~$x_i^1$, agent~$D_{j'}^{f_{j'}(\negv_i)}$ needs to obtain it from agent~$X_i^1$
    since~$d_{j'}^{f_{j'}(\negv_i)}$ is the only object to which agent~$D_{j'}^{f_{j'}(\negv_i)}$ prefers $x_i^1$ and
    since~$X_i^1$ is the only agent who prefers~$d_{j'}^{f_{j'}(\negv_i)}$ to~$x_i^1$.
    This implies that agent~$D_{j}^{f_{j}(v_i)}$ did not obtain object~$x_i^1$ from agent~$X_i^1$, and hence did not hold object~$x_i^1$ during the whole swap sequence.
    However, since $x_i^1$ is the only object that agent~$B_j^{f_j(\ell_j)}$ prefers to~$x$,
    it follows that agent~$D_j^{f_(j)(v_i)}$ has not swapped with $B_j^{f_j(\ell_j)}$ for $x$---a contradiction to our assumption above (before the case study).    

    Analogously, if $\occ(i)=3$, implying that there are exactly two variable agents, namely $X_i^1$ and $X_i^2$ for $v_i$, then by the definition of $\tau$ it follows that $\tau(C_{j'}, \negv_i)=x_i^2$.
    By a similar reasoning as in the case of $\occ(i)=2$, it follows that $X_{i}^1$ did not swap with any other agent for object~$x_i^1$ as this would require her to swap her initial object~$x_i^2$ which she gave away for $d_{j'}^{f_{j'}(\negv_i)}$.
    Consequently, agent~$D_j^{f_j(v_i)}$ would \myemph{not} have swapped either with agent~$X_i^1$  for object~$x_i^1$ or with agent~$X_i^2$ for object~$x_i^1$--a contradiction to our initial assumption that $D_j^{f_j(v_i)}$ swapped away $\tau(C_j, x_i)$ which is either $x_i^1$ or $x_i^2$.
    This completes the ``if'' part.
  \end{proof}

  \section{Generalized Caterpillars}\label{sec:caterpillars}

We obtain NP-hardness for \ROs{} on generalized caterpillars where each hair has length at most two and only one vertex has degree larger than two.
This strengthens the NP-hardness of \ROs{} on trees~\cite{GouLesWil2017} (their constructed tree is a generalized caterpillar where each hair has length three and there is only one vertex of degree larger than two).

For the sake of completeness, we give a full proof including parts of the original proof by \citet{GouLesWil2017}.

\newcommand{\nphcatepillars}{
\ROs{} is NP-hard on generalized caterpillars where each hair has length at most two and only one vertex has degree larger than two.  
}
\begin{theorem}
  \label[theorem]{thm:cater}
  \nphcatepillars
\end{theorem}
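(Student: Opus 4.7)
The plan is to adapt the tree-based NP-hardness reduction of \citet{GouLesWil2017} while shrinking every hair from length three to length two and simultaneously ensuring that only one vertex of the caterpillar has degree larger than two. I would reduce from a restricted variant of \threesat{} in which each variable has a bounded number of occurrences (for instance the variant used in \cref{thm:NP-c-length-4}). Given an input formula~$\phi$ with variables~$\mathcal{V}$ and clauses~$\mathcal{C}$, the constructed caterpillar has a single spine path: the agent initially holding the target object~$x$ sits at one end, the target agent~$I$ at the other end, a sequence of constant-size clause gadgets is chained along the spine between them, and a unique branching vertex~$c$ on the spine carries all variable hairs. Because all branching happens at~$c$, the degree condition is satisfied automatically.

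Each variable~$v_i \in \mathcal{V}$ is represented by one length-two hair per literal occurrence, rooted at~$c$: an inner agent~$m_i$ adjacent to~$c$ and a leaf~$\ell_i$ adjacent to~$m_i$. The preferences of~$\ell_i$ and~$m_i$ are designed so that the single rational swap on the hair commits the hair to releasing either a ``positive'' token~$p_i$ or a ``negative'' token~$n_i$ toward~$c$, and cannot be rewound. When a variable occurs multiple times, I would link the leaves of its hairs through a short consistency mechanism (analogous in spirit to the $X_i^1$--$X_i^2$ chain of \cref{thm:NP-c-length-4}) so that all copies must commit to the same value. Each clause gadget is a short sub-path whose preferences block~$x$ from traversing unless a token matching one of the clause's literals is delivered from~$c$ along the spine; since the spine is a path, such a delivery is only possible via the narrow ``highway'' that passes through~$c$.

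Correctness in the forward direction is a constructive schedule: given a satisfying assignment~$\beta$, first commit every variable hair according to~$\beta$, then move~$x$ one clause gadget at a time toward~$I$, consuming at each gadget a literal token that~$\beta$ sets to \ttrue. The backward direction uses a clause-by-clause backward trace in the style of \cref{claim:clause-literal}: trace from~$I$ back along the spine, argue that~$x$ cannot leave any clause gadget without consuming exactly one literal token released from~$c$, and read off a truth assignment from the tokens consumed. The main obstacle, and the reason that length two is tight here, is the loss of the extra buffering vertex that length-three hairs provide in the original construction. With only two vertices per hair, the inner agent~$m_i$ must both encode the variable choice and hand it off to~$c$, so its short preference list has to be crafted so that the swap with~$c$ occurs only once and only with the ``right'' token. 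The delicate part of the analysis will be to exclude all unintended interactions among hairs meeting at~$c$ (in particular, that two hairs corresponding to complementary literals cannot both deliver their tokens in the same swap sequence), which I expect to handle via a careful case distinction on the order in which hairs commit.
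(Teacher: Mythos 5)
Your proposal is a blueprint rather than a proof: the actual gadgets --- the preference lists that make a length-two hair commit irreversibly to one truth value, the ``consistency mechanism'' tying together the hairs of one variable, and the clause gadgets that consume exactly one matching token --- are all left unspecified, and you explicitly defer ``the delicate part of the analysis.'' Since the entire content of such a hardness result lives in those preference lists and in excluding unintended swaps, what you have written does not yet establish the theorem. Beyond that, your chosen architecture creates a concrete routing obstacle that you do not address. You place the clause gadgets on the spine between the holder of~$x$ and~$I$ and require each literal token to be ``delivered from~$c$ along the spine'' to its gadget. Then the token for a clause gadget far from~$c$ must pass through every intermediate clause gadget without being consumed there and without enabling spurious swaps, every spine agent in between must rationally accept and forward it, and each such token must at some point be swapped past the target object~$x$, which travels the spine in the opposite direction. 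None of this is argued, and it is exactly the kind of interaction that tends to break path-based gadget constructions (compare the machinery needed in \cref{sec:paths} just to control which pairs of objects can cross on a path).

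The paper's construction sidesteps all of this by making the \emph{target agent} the unique high-degree vertex. The variable gadgets are length-two hairs attached directly to~$C_m$: a hair $D_i^1$--$D_i^2$ releases exactly one of the tokens $+_i$ or $-_i$ to~$C_m$, which then unlocks either the positive-literal agents $X_i^{p_1^i},X_i^{p_2^i}$ (backed by leaves $P_i^1,P_i^2$) or the negative-literal agent $\bar X_i^{n^i}$ (backed by leaf $N_i$) --- this is how consistency across occurrences is enforced, with no linking between hairs needed. The clause verification is the central path $C_m - C_{m-1} - \cdots - C_1 - T$ hanging off~$C_m$, with the target object~$t$ at the far end~$T$; literal objects are handed to~$C_m$ by its neighbors and injected one at a time into the head of this path, and each clause agent releases~$t$ one step further toward~$C_m$ only in exchange for a literal satisfying the next clause. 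So no literal token ever has to transit another clause gadget, and~$t$ never crosses a token on a shared spine. If you want to pursue your layout, you would have to either solve the through-traffic problem explicitly or restructure so that, as in the paper, all tokens originate adjacent to the verification path's entry point --- at which point you have essentially reproduced the paper's design.
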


\begin{proof}
We use the same notation as \citet{GouLesWil2017}.
Let~$\phi = (\mathcal{V,C})$ be an instance of \textsc{Two positive one negative at most 3-Sat} with variable set~${\mathcal{V} = \{v_1,\ldots, v_n\}}$ and clause set~$\mathcal{C}=\{C_1,\ldots, C_m\}$.
For each variable~$v_i$, let~$p_1^i,p_2^i$ and~$n^i$ denote the clause in which~$v_i$ occurs first as a positive literal, second as a positive literal and as a negative literal, respectively. 
Accordingly, we denote the respective literal of~$v_i$ by~$v_i^{p_1^i},v_i^{p_2^i}$ or~$\bar{v}_i^{n^i}$.

The instance of \ROs{} is constructed as follows.
For each variable~$v_i$, we add a ``variable gadget'' consisting of nine agents~$\bar{X}_i^{n^i}, X_i^{p_1^i}, X_i^{p_2^i}, D_i^1, D_i^2, P_i^1, P_i^2, N_i$, and~$H_i$.
For each clause~$C_i$, we add one ``clause agent''~$C_i$ and we add an additional agent~$T$ which starts with an object~$t$. We ask whether agent~$C_m$ can get object~$t$.
\begin{figure}
\centering
\begin{tikzpicture}[scale=0.8, every node/.style={scale=0.95}]
\node[circle, draw, label=above:$C_m$] (Cm) at (0,6) {};
\node[circle, draw, label=above:$C_{m-1}$] (Cm1) at (1,6) {};
\node (Cdots) at (2,6) {$\dots$};
\node[circle, draw, label=above:$C_{1}$] (C1) at (3,6) {};
\node[circle, draw, label=above:$T$] (T) at (4,6) {};

\node[circle, draw, label=left:$D_i^1$] (D1) at (-4.5,3) {};
\node[circle, draw, label=left:$X_i^{p_1^i}$] (X1) at (-3,3) {};
\node[circle, draw, label=left:$X_i^{p_2^i}$] (X2) at (-1.5,3) {};
\node[circle, draw, label=left:$\bar{X}_i^{n^i}$] (Xn) at (0,3) {};
\node[circle, draw, label=right:$H_i$] (H) at (1.5,3) {};
\node[circle, draw, label=right:$D_{i+1}^1$] (D1') at (3,3) {};
\node[] (dots) at (5,3) {$\dots$};

\node[circle, draw, label=left:$D_i^2$] (D2) at (-4.5,0) {};
\node[circle, draw, label=left:$P_i^1$] (P1) at (-3,0) {};
\node[circle, draw, label=left:$P_i^2$] (P2) at (-1.5,0) {};
\node[circle, draw, label=left:$N_i$] (N) at (0,0) {};
\node[circle, draw, label=right:$D_{i+1}^2$] (D2') at (3,0) {};

\draw (Cm) -- (Cm1);
\draw (Cm1) -- (Cdots);
\draw (Cdots) -- (C1);
\draw (C1) -- (T);
\draw (Cm) -- (D1);
\draw (Cm) -- (X1);
\draw (Cm) -- (X2);
\draw (Cm) -- (Xn);
\draw (Cm) -- (H);
\draw (Cm) -- (D1');
\draw (Cm) -- (dots);
\draw (D1) -- (D2);
\draw (X1) -- (P1);
\draw (X2) -- (P2);
\draw (Xn) -- (N);
\draw (D1') -- (D2');
\end{tikzpicture}
\caption{Illustration of the construction in the proof of \cref{thm:cater} showing one variable gadget (bottom) for representing the assignment of one variable and the ``clause path'' (top) that verifies whether some assignment satisfies the input formula~$\phi$.}
\label{fig:hair2}
\end{figure}
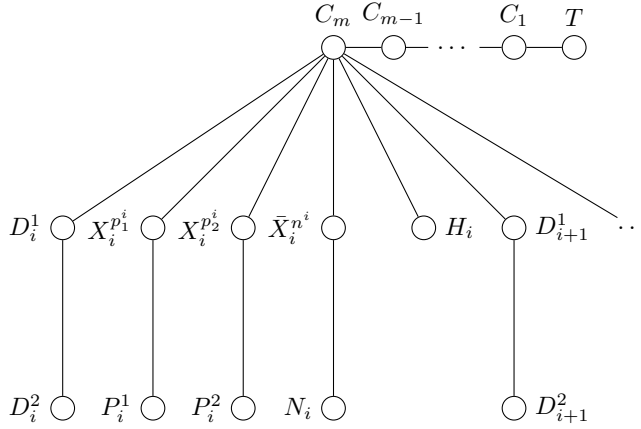
The underlying graph is depicted in \cref{fig:hair2}.
The preference lists of the agents are as follows:
  \allowdisplaybreaks[1]
\begin{align*}
D_i^1 &: v_i \succ -_i \succ \fbox{$+_i$},\\
D_i^2 &: +_i \succ \fbox{$-_i$},\\
X_i^{p_1^i} &: c_{m-p_1^i+1} \succ x_i^{p_1^i} \succ +_i \succ \fbox{$++_i$},\\
P_i^1 &: +_i \succ \fbox{$x_i^{p_1^i}$},\\
X_i^{p_2^i} &: c_{m-p_2^i+1} \succ x_i^{p_2^i} \succ ++_i \succ \fbox{$p_i$},\\
P_i^2 &: ++_i \succ \fbox{$x_i^{p_2^i}$},\\
\bar{X}_i^{n^i} &: c_{m-p_n^i+1} \succ \bar{x}_i^{n^i} \succ -_i \succ \fbox{$n_i$},\\
N_i &: -_i \succ \fbox{$\bar{x}_i^{n^i}$},\\
H_n &: p_n \succ n_n \succ \fbox{$c_m$},\\
H_i &: p_i \succ n_i \succ \fbox{$v_{i+1}$},\\
  C_m &: t \succ \{\ell_m\} \succ c_1 \succ \{\ell_{m-1}\} \succ \ldots \succ c_{m-1} \succ \{\ell_1\} \succ c_m \succ \\
      &\ \ p_n \succ n_n \succ -_n \succ ++_n \succ +_n \succ v_n \succ\\
  & \ \ p_{n-1} \succ n_{n-1} \succ \ldots \succ +_1 \succ \fbox{$v_1$},\\
C_j &: \{\ell_{j+1}\} \succ t \succ \{\ell_j\} \succ c_1 \succ \{\ell_{j-1}\} \succ \ldots \succ c_{j-1} \succ \{\ell_1\} \succ \fbox{$c_j$},\\
T &: \{\ell_1\} \succ \fbox{$t$},
\end{align*}
where~$\{\ell_j\}$ is the set of all literal objects corresponding to the literals in~$C_j$ ranked in arbitrary order. 

We start by showing that the variable gadgets work properly, that is, each variable is either set to \texttt{true} or \texttt{false} and the different literal objects can be used in an arbitrary order.
Consider a variable~$v_i$ and note the following:
\begin{compactitem}
  \item $D_i^1$ can only give either~$+_i$ or~$-_i$ to~$C_m$,
  \item $N_i$ will only release~$\bar{x}_i^{n^i}$ in exchange for~$-_i$,
  \item $P_i^1$ will only release~$x_i^{p_1^i}$ in exchange for $+_i$, and
  \item $P_i^2$ will only release~$x_i^{p_2^i}$ in exchange for~$++_i$.
\end{compactitem}
Furthermore, observe that we cannot use~$c_{m-p_1^i+1}$ to release~$++_i$ from~$X_i^{p_1^i}$ since agent~$C_m$ prefers~$c_{m-p_1^i+1}$ over~$++_i$.
Hence, agent~$C_m$ can only receive either the ``negative token''~$\bar{x}_i^{n^i}$ or positive ``token(s)''~$x_i^{p_1^i}$ and/or~$x_i^{p_2^i}$.

Once~$D_i^1$ and~$D_i^2$ have decided on whether~$+_i$ or~$-_i$ is traded in exchange for~$v_i$, agent~$C_m$ trades either with~$X_i^{p_1^i}, X_i^{p_2^i}$ and~$H_i$ or with~$\bar{X}_i^{n^i}$ and~$H_i$ in this order.
Afterwards, either~$X_i^{p_1^i}$ and~$P_i^1$ (and also~$X_i^{p_2^i}$ and~$P_i^2$) can swap or~$\bar{X}_i^{n^i}$ and~$N_i$ can swap their current objects.
Thus, after fixing all variable gadgets, agent~$C_m$ holds the object~$c_m$ and can swap it with one of the agents~$X_i^{p_1^i},X_i^{p_2^i}$ or~$\bar{X}_i^{n^i}$.

The remainder of the proof works exactly as the proof of \citeauthor{GouLesWil2017}~\cite[Theorem 1]{GouLesWil2017}.
Once each variable is assigned a truth value, the path of clause agents verifies that each clause is satisfied by this assignment.
To this end, first notice that each clause agent~$C_j$ (excluding~$C_m$) prefers~$t$ over her initial object~$c_j$ and only accepts to swap~$t$ for an object in~$\{\ell_{j+1}\}$.
Hence, the only way to move~$t$ from~$T$ to~$C_{m-1}$ involves giving each agent~$C_i$ (again excluding~$C_m$) an object associated with a literal that satisfies the clause~$C_{i+1}$.
Finally, observe that the first and last clauses also need to be satisfied in order for~$T$ and~$C_{m-1}$ to give away~$t$.
Thus, if there is a sequence of swaps such that~$C_m$ gets~$t$ in the end, then~$\phi$ is satisfiable.

If~$\phi$ is satisfiable, then there is a sequence of swaps such that~$C_m$ gets~$t$: 
We first iterate over all variable gadgets and set their value according to some satisfying assignment for~$\phi$.
Once this is done, agent~$C_m$ has object~$c_m$ and can now trade this for an object corresponding to a literal that satisfies clause~$C_1$.
This object is then passed to agent~$T$ such that~$C_m$ has object~$c_{m-1}$ which can again be swapped for an object representing a satisfying literal for clause~$C_2$.
This object is then passed to agent~$C_1$ and this procedure is repeated until~$t$ reaches~$C_m$.
\end{proof}

\section{Conclusion}

We investigated the computational complexity of \ROs{} with respect to different restrictions
regarding the underlying graph and the agent preferences.
Our work narrows the gap between known tractable and intractable cases leading to a comprehensive understanding of the complexity of \ROs{}.
In particular, we settled the complexity with respect to the preference lengths.

Several questions remain open: Can \ROs{} be
solved in polynomial time on caterpillars? Note that on stars \ROs{} can be
solved in polynomial time~\cite[Proposition~1]{GouLesWil2017}).
Also, the complexity of \ROs{} on graphs of maximum degree three
is open. \citet[Theorem 4]{SW18} showed
NP-hardness of \ROs{} on graphs of maximum degree four, while our
results imply polynomial-time solvability
of \ROs{} on graphs of maximum degree two. (Note that a graph of
maximum degree two is the disjoint union of paths and cycles,
and note that we provide an efficient algorithm for paths and sketch an adaption for cycles at the end of~\cref{sec:paths}.)
Regarding preference restrictions, following the line of studying stable matchings on restricted domains~\cite{BreCheFinNie2017}, it would be interesting to know whether assuming a special preference structure can help in finding tractable cases for our problem.
Finally, one may combine resource allocation with social welfare, measured by the egalitarian or utilitarian cost~\cite{damamme_power_2015},
and study the parameterized complexity of finding a reachable assignment which meets these criteria as studied in the context of stable matchings~\cite{CHSYicalp-par-stable2018}.

\paragraph*{Acknowledgments}
The work on this paper started at the research retreat of the Algorithmics and Computational Complexity group, TU~Berlin, held at Darlingerode, Harz, March~2018.

\bibliographystyle{named}

\end{document}